\newcommand{\real}{\mathbb{R}}
\newcommand{\realnn}{\real_{\geqslant 0}}
\newcommand{\realp}{\real_{> 0}}
\newcommand{\nat}{\mathbb{N}}
\newcommand{\Tt}{\mathcal{T}}
\newcommand{\Aa}{\mathcal{A}}
\newcommand{\Bb}{\mathcal{B}}
\newcommand{\Vv}{\mathcal{V}}
\newcommand{\CostFun}{\mathcal{CF}}
\newcommand{\set}[1]{\{ #1 \}}
\newcommand{\eset}[1]{\set{#1}}
\newcommand{\struct}[1]{\langle #1 \rangle}
\newcommand{\sq}{\::\:}
\newcommand{\move}[1]{\xrightarrow{#1}}
\newcommand{\piecewise}[1]{\langle #1 \rangle}
\DeclareMathOperator{\minC}{minC}
\DeclareMathOperator{\maxC}{maxC}
\DeclareMathOperator{\argmin}{arg\,min}
\DeclareMathOperator{\override}{\triangleright}
\newcommand{\lmin}{l^*}
\newcommand{\mMIN}{\mathrm{Min}}
\newcommand{\mMAX}{\mathrm{Max}}
\newcommand{\mGoal}{\mathrm{Goal}}
\newcommand{\Loc}{L}
\newcommand{\LGoal}{\Loc^{\mGoal}}
\newcommand{\LMax}{\Loc^{\mMAX}}
\newcommand{\LMin}{\Loc^{\mMIN}}
\newcommand{\urg}{mathrm{urg}}
\newcommand{\fgoal}{f^{\mGoal}}
\newcommand{\afgoal}{{mathrm{CF}^\mGoal}}
\newcommand{\Length}{mathrm{Len}}
\newcommand{\Last}{mathrm{Last}}
\newcommand{\Runs}{\mathrm{Runs}}
\newcommand{\FRuns}{\Runs_{\mathrm{fin}}}
\newcommand{\Run}{\mathrm{Run}}
\newcommand{\Stop}{\mathrm{Stop}}
\newcommand{\Cost}{\mathrm{Cost}}
\newcommand{\Price}{\mathrm{Price}}
\newcommand{\States}{S}
\newcommand{\SGoal}{\States^{\mGoal}}
\newcommand{\SMin}{\States^{\mMIN}}
\newcommand{\SMax}{\States^{\mMAX}}
\newcommand{\SigmaMin}{\Sigma^{\mMIN}}
\newcommand{\SigmaMax}{\Sigma^{\mMAX}}
\newcommand{\PiMin}{\Pi^{\mMIN}}
\newcommand{\OptCost}{\mathrm{OptCost}}
\newcommand{\nonurg}{\mathrm{NonUrgent}}
\newcommand{\solverp}{\mathtt{SolveRP}}
\newcommand{\tcompose}{\mathtt{CostConsistent}}
\newtheorem{theorem}{Theorem}
\newtheorem{definition}[theorem]{Definition}
\newtheorem{remark}[theorem]{Remark}
\newtheorem{lemma}[theorem]{Lemma}
\newtheorem{example}[theorem]{Example}
\newtheorem{problem}[theorem]{Problem}
\newtheorem{proposition}[theorem]{Proposition}
\newcommand{\qed}{\hfill$\square$}
\newenvironment{proof}{\noindent\emph{Proof.}\hspace{0.5em}}{}
\title{Two-Player Reachability-Price Games on Single-Clock Timed
  Automata}  \author{Micha{\l} Rutkowski
  \institute{Department of Computer Science, University of Warwick,
    UK}}
\begin{document}

\maketitle

\begin{abstract}
   We study two player reachability-price games on single-clock timed
   automata. The problem is as follows: given a state of the
   automaton, determine whether the first player can guarantee
   reaching one of the designated goal locations. If a goal location
   can be reached then we also want to compute the optimum price of
   doing so. Our contribution is twofold. First, we develop a theory
   of cost functions, which provide a comprehensive methodology for the analysis of
   this problem. This theory allows us to establish our second
   contribution, an EXPTIME algorithm for computing the optimum
   reachability price, which improves the existing 3EXPTIME upper
   bound. 
\end{abstract}

%
%
\section{Introduction}
Timed automata~\cite{AD94} are a formalism used for modeling real time
systems, i.e., systems whose behavior depends on time. Timed automata
are finite automata augmented with a set of clocks. The values of the
clocks grow uniformly over time. There are two types of transitions:
continuous, resulting in time progression, and discrete, resulting in
a change of location. Discrete transitions may reset values of certain
clocks to zero, and different transitions may be enabled at different
clock values.

Optimal schedule synthesis is one of the key areas of research in
timed automata theory \cite{BBL04,BBL08,BLMR06,JT07,JT08}. In this
setting, timed automata are augmented with pricing information, and
each execution of the automaton is assigned a payoff. Moreover, we
want to model lack of full control over the system; game theory is
commonly used in this context
\cite{BBJLR08,BLMR06,JLR09,JT08,JT07}. There are two players: the
minimizer and the maximizer\footnote{In the literature, these players
  are often referred to as the controller and the environment.}, who
have opposite goals of minimizing and maximizing the payoff of a play,
respectively. In this case a play is an execution of the automaton,
and we are dealing with the worst case scenario, where the controller
is interacting with an adversarial environment. 
In this context, reachability-price games are commonly
considered~\cite{BBM06,BLMR06,JT07}.  In these games the goal is to
optimize the accumulated price of reaching a designated set of states.

When the payoff of an execution is simply its time duration,
synthesizing an almost-optimal schedule is EXPTIME-complete
\cite{JT07}. In linearly priced timed automata \cite{BBM06,BLMR06},
the price of an individual continuous transition is its duration
multiplied by a location specific price rate. Bouyer et al. show that
determining the existence of an optimal schedule for linearly priced
timed automata, with at least three clocks is undecidable
\cite{BBM06}. On the other hand, Bouyer et al. show a triply
exponential algorithm for single-clock linearly priced timed automata
\cite{BLMR06}. However, the exact complexity of the problem is still
unknown, as PTIME is the best lower bound that is currently known.

\paragraph{Contributions.} 
In this paper, we present a new EXPTIME algorithm for optimal schedule
synthesis for linearly priced single-clock timed automata with
non-negative price rates. Our work improves the triply exponential
algorithm given by Bouyer et al. \cite{BLMR06}.

Our contribution is twofold. First, in order to deliver the main
result, we establish technical results regarding cost functions, i.e.,
piecewise affine continuous functions that are non-increasing. Cost
functions in this form were first considered by Bouyer et
al. \cite{BLMR06}. They are central to both algorithms, the one
presented in this paper, and that of Bouyer et al. \cite{BLMR06}, as
both algorithms use them to produce their output. The output of the
algorithms is a function that assigns the optimal price of
reachability to each state; this output function can be represented by a
finite set of cost functions. Our technical results regard operations
performed on cost functions during the execution of the algorithm. We
establish the properties and invariants of these operations, which
later allows us to analyze the complexity, and prove the correctness
of the algorithm. This understanding of cost functions was pivotal in
achieving the doubly-exponential speedup, and we believe that this
detailed analysis might prove useful in further closing the existing
complexity gap.

Second, we show an EXPTIME algorithm for computing the optimal price
of reachability. As in Bouyer's et al. approach \cite{BLMR06}, the
algorithm does the computation through a recursive procedure, with
respect to the number of locations of the timed
automaton. Again, as in Bouyer's et al. work \cite{BLMR06}, in each
recursive call we single out the location that minimizes the price
rate. In the model, locations are assigned to players, which
necessitates different handling of a location, depending on its
ownership. In the case of the maximizer locations, our algorithm
behaves exactly like the original, however, when it comes to handling
minimizer locations, we improve over the predecessor. The original
algorithm would proceed to recursively solve two subproblems, which
resulted in an additional exponential blowup. The algorithm presented
in this paper, as in the case of maximizer locations, employs an
iterative procedure which prevents this blowup. The approach is
similar in spirit to that used in handling maximizer locations,
however, the details are different.

Tools for handling games, where the price of an execution is its time
duration, already exist (e.g., UPPAAL~\cite{BDL04}). We believe that the work
presented in this paper may help in the development of such tools for
reachability-price games on linearly priced timed automata.

%
%
\section{Preliminaries}
\label{sec:preliminaries}
\paragraph{Cost functions.}
Below we introduce the notion of a cost function, and prove some of
its basic properties. Cost functions are a central notion when
considering reachability-price games on single-clock timed automata. The
theory of cost functions will be used to construct the algorithm for
computing the optimal reachability cost, as well as to prove its
correctness.

In this paper, we will be dealing with the ordered set of real numbers
augmented with the greatest element, positive infinity. For that
purpose we need to extend the $+$, $\min$ and $\max$ operators in a
natural way. For $a \in \real \cup \eset{\infty}$ we have $a+\infty =
\infty$, $\max(a,\infty) = \infty$ and $\min(a,\infty) = a$.

\begin{definition}[Cost function]
  A function $f : I \to \real \cup \eset{\infty}$ that is continuous,
  non-increasing, and piecewise affine, where $I$ is a bounded
  interval, is said to be a \emph{cost function}.  We will write
  $\CostFun(I) \subseteq [I\to\real \cup \eset{\infty}]$ to denote the
  set of all cost functions with the domain $I$.
\end{definition}

\begin{remark}
  Notice that if $f \in \CostFun(I)$, and $f(x) = \infty$ for some $x
  \in I$ then $f \equiv \infty$, over $I$.\qed
\end{remark}

At times, we will need to talk about the individual affine functions,
i.e., the pieces of a cost function. To make this easier, we introduce
the following convention.  Given an interval $I$, let $f : I \to
\real$ be a cost function, we will write $f =
\piecewise{f_1,\ldots,f_k}$ to denote the fact that the piecewise
affine function~$f$ consists of affine pieces $f_1,\ldots,f_k$, with
domains $I_1,\ldots,I_k$, where $k$ is the smallest integer such that
\[
  f(x) = \begin{cases}
    f_1(x) & x \in I^f_1\\
    \multicolumn{2}{c}{\dotsb}\\
    f_k(x) & x \in I^f_k.
    \end{cases} 
\]
Throughout the paper, we will be implicitly assuming that $I = [b,e]$,
and that $I^f_i = \left[b_i^f,e_i^f\right]$, for $i \in
\eset{1,\ldots,k}$, with $e_{i+1}^f = b_i^f$, for $i \in
\eset{1,\ldots,k-1}$. The formula for the individual segment $f_i$
will be given by $a^f_i\cdot x + c^f_i$, for $i \in
\eset{1,\ldots,k}$. If $f$ is clear from the context, we will omit the
superscript.

We now introduce two operators, which are key in defining the
relationship between reachability cost functions of a location and its
successors. This will be later summarized by Lemma
\ref{lem:OptCost_opt_eq}.  Given a cost function $f : [b,e] \to
\real\cup \eset{\infty}$ and a positive constant $c$, we define the
following two operators, that transform cost functions:
\[
  \minC(f,c) = x \mapsto \min_{0 \leqslant t \leqslant e-x} ct + f(x+t)
  \]
and $\maxC(f,c)$ defined analogously, with max substituted for min.

\begin{lemma}
  \label{lemma:minC_piecewise_affine}
  Let $c$ be a positive constant. If $f : [b,e] \to \real$ is a cost
  function then $\minC(f,c)$ is a cost functions as well. 
  The same holds for $\maxC$.
 \end{lemma}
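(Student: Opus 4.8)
The plan is to eliminate the inner variable by the substitution $s = x + t$, which converts the moving lower limit into a moving left endpoint of the optimisation interval. Writing $h(s) = cs + f(s)$, we have, for every $x \in [b,e]$,
\[
  \minC(f,c)(x) \;=\; \min_{0 \leqslant t \leqslant e-x} \big(ct + f(x+t)\big) \;=\; -cx + \min_{x \leqslant s \leqslant e} h(s).
\]
Set $H(x) = \min_{x \leqslant s \leqslant e} h(s)$, so that $\minC(f,c) = -cx + H(x)$. Since $c > 0$ and $f$ is continuous and piecewise affine with finitely many pieces, so is $h$; hence everything reduces to analysing the ``running minimum from the right'' $H$ of the finite piecewise-affine function $h$. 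Because $f$ is finite on the compact interval $[b,e]$, $h$ is bounded and $H$ is real-valued, so no $\infty$ values arise.

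For the piecewise-affine character, I would use that the minimum of an affine function over a closed interval is attained at an endpoint, so the minimum of $h$ over $[x,e]$ is always attained at $x$, at $e$, or at one of the interior breakpoints of $h$ lying in $[x,e]$. Thus if $p_0 = b < p_1 < \dots < p_n = e$ are the breakpoints of $h$, then
\[
  H(x) \;=\; \min\Big( h(x),\ \min_{i \,:\, p_i \geqslant x} h(p_i) \Big).
\]
The second term is a non-decreasing step function of $x$ with jumps only at the $p_i$, hence piecewise affine (with zero slopes); as a minimum of two piecewise-affine functions with finitely many pieces, $H$ is piecewise affine with finitely many pieces, and therefore so is $\minC(f,c) = -cx + H$. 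Continuity of $H$ follows from the continuity of $h$: the parametric minimum of a continuous function over the compact interval $[x,e]$, whose endpoints depend continuously on $x$, is continuous (alternatively one checks directly that the two descriptions of $H$ agree at each breakpoint $p_i$, the only candidate discontinuities), and subtracting $cx$ preserves continuity.

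The main obstacle is monotonicity, and it is genuinely delicate for $\minC$: the map $H$ is \emph{non-decreasing} in $x$ (shrinking $[x,e]$ can only raise its minimum), so it works \emph{against} the term $-cx$, and non-increasingness of $\minC(f,c)$ is not automatic. The point is that $H$ can increase at rate at most $c$, and this is exactly where the hypothesis that $f$ is non-increasing enters. Concretely, take $x_1 < x_2$ and split $[x_1,e] = [x_1,x_2] \cup [x_2,e]$, giving $H(x_1) = \min\big(\min_{s \in [x_1,x_2]} h(s),\, H(x_2)\big)$. If the first term is $\geqslant H(x_2)$ then $H(x_1) = H(x_2)$; otherwise $H(x_1) = h(s_0)$ for some $s_0 \in [x_1,x_2]$, and since $x_2 \in [x_2,e]$ we get $H(x_2) \leqslant h(x_2)$, so
\[
  H(x_2) - H(x_1) \;\leqslant\; h(x_2) - h(s_0) \;=\; c(x_2 - s_0) + \big(f(x_2) - f(s_0)\big) \;\leqslant\; c(x_2 - s_0) \;\leqslant\; c(x_2 - x_1),
\]
using $s_0 \leqslant x_2$ together with $f$ non-increasing (so $f(x_2) - f(s_0) \leqslant 0$) and $s_0 \geqslant x_1$. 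Hence $H(x_2) - cx_2 \leqslant H(x_1) - cx_1$, i.e. $\minC(f,c)(x_1) \geqslant \minC(f,c)(x_2)$, establishing that $\minC(f,c)$ is non-increasing.

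Finally, the case of $\maxC$ runs along the same substitution, $\maxC(f,c)(x) = -cx + \widetilde H(x)$ with $\widetilde H(x) = \max_{x \leqslant s \leqslant e} h(s)$; the piecewise-affine and continuity arguments are identical with $\max$ in place of $\min$ (the relevant step function is now non-increasing). Here monotonicity is actually easier and needs no appeal to $f$ being non-increasing: enlarging the interval can only raise a maximum, so $\widetilde H$ is itself non-increasing in $x$, and subtracting the increasing term $cx$ keeps it non-increasing. This completes the plan for both operators.
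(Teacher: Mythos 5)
Your proof is correct. The substitution $s = x + t$ reducing $\minC(f,c)(x)$ to $-cx + H(x)$ with $H(x) = \min_{x \leqslant s \leqslant e}\bigl(cs + f(s)\bigr)$ is sound; the breakpoint formula for $H$, the continuity check, and in particular the Lipschitz bound $H(x_2) - H(x_1) \leqslant c(x_2 - x_1)$ (which is exactly where non-increasingness of $f$ is needed, and which you correctly isolate as the delicate point for $\minC$) all hold, as does the easier monotonicity argument for $\maxC$. Note that the paper itself states this lemma without an explicit proof; its implicit argument is the structural, piece-by-piece one described right after the lemma and formalized in Proposition~\ref{prop:minC_segment_preservation} and Example~\ref{ex:minC_operator}: one scans the affine pieces of $f$, keeps those with slope no steeper than $-c$ (possibly on shrunken domains), and replaces the steeper ones by pieces of slope exactly $-c$. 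That constructive route proves the lemma and simultaneously delivers the finer invariants of Proposition~\ref{prop:minC_segment_preservation} (which segments survive, that surviving segments keep their left endpoints, and the slope bound $a_j^g \geqslant -c$), which the paper's complexity and termination analysis later relies on. Your analytic route is shorter and more self-contained for the lemma as stated, but it would need additional bookkeeping to recover that segment-preservation structure; conversely, the paper's approach buys Proposition~\ref{prop:minC_segment_preservation} essentially for free at the cost of a more case-heavy construction.
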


The following Proposition formalizes the intuition how the $\minC$
($\maxC$) operators affect the function $f$. The $\minC$ ($\maxC$)
operator removes all pieces of $f$ that have slopes steeper
(shallower) than $-c$, and substitutes them with pieces that have a
slope equal to~$-c$. For the remaining pieces, the formula remains
unchanged, but the domain may change. However, the new domain is
always a subset of the domain in $f$.

\begin{proposition}
  \label{prop:minC_segment_preservation}
  Let $f = \piecewise{f_1,\ldots,f_k}$ and let $\minC(f,c) =
  \piecewise{g_1,\ldots,g_l}$.  We have that $l \leqslant k$, and for every
  $j \leqslant l$: if the formulas for $g_j$ and $f_i$ are equal, for
  some $i \leqslant k$, then $b^f_i \in I^g_j \subseteq I^f_i$,
  otherwise $a^g_j > -c$. Moreover, $a^g_j \geqslant -c$, for all
  $j=1,\ldots,l$.
\end{proposition}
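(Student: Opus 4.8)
The plan is to collapse the minimization in the definition of $\minC(f,c)$ to a one–dimensional ``right–cumulative minimum''. Setting $h(y)=f(y)+c\,y$ and
\[
  M(x)=\min_{y\in[x,e]}h(y),
\]
the substitution $t=y-x$ gives $\minC(f,c)(x)=M(x)-c\,x$. Subtracting the affine term $c\,x$ leaves breakpoint locations unchanged and lowers every slope by $c$, so each claim about $\minC(f,c)$ becomes a claim about $M$: a piece of $\minC(f,c)$ of slope $a$ sits over a piece of $M$ of slope $a+c$. (If $f\equiv\infty$ the statement is immediate from the Remark, so I assume $f$ is real–valued, as the $\piecewise{f_1,\dots,f_k}$ notation presupposes.)

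Next I would pin down the shape of $M$. It is continuous, piecewise affine, and non–decreasing, since shrinking $[x,e]$ cannot lower the minimum. Over the domain $I^f_i$ of a single piece of $f$ one has $M(x)=\min\bigl(\min_{y\in[x,e^f_i]}h(y),\,R_i\bigr)$ with $R_i=\min_{y\ge e^f_i}h(y)$, and splitting on the slope $a^f_i+c$ of $h$ leaves exactly two possibilities. If $a^f_i>-c$ (a shallow piece, $h$ increasing) then $M$ equals $h$ on a left sub–interval $[b^f_i,x^{\ast}]$ of $I^f_i$ and is constant on the rest; if $a^f_i\le -c$ ($h$ non–increasing) then $M$ is constant on all of $I^f_i$, equal to the future minimum $R_i$. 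Reading this back through $\minC(f,c)=M-c\,x$ gives the whole structural picture: every slope of $\minC(f,c)$ is at least $-c$; a piece of slope $>-c$ arises only from the ``following'' regime over a shallow $f_i$, where $M=h$ forces $\minC(f,c)=f_i$, so that piece carries the formula of $f_i$ and, being the left part $[b^f_i,x^{\ast}]$ of $I^f_i$, has domain with $b^f_i\in I^g_j\subseteq I^f_i$; all remaining pieces have slope exactly $-c$.

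The genuine obstacle is the count $l\le k$: an interior split point $x^{\ast}$ is a breakpoint of $\minC(f,c)$ that is typically not a breakpoint of $f$, so breakpoints do not merely disappear and a crude comparison fails. I would instead build two injections with disjoint ranges. Every piece of slope $>-c$ lies, by the previous paragraph, inside the domain of a single shallow $f_i$ and carries its formula; since a shallow $f_i$ contributes at most one such connected sub–interval, this injects the pieces of slope $>-c$ into the shallow pieces of $f$. For a piece of slope $-c$, look at the maximal stretch $[\alpha,\beta]$ on which $M$ is constant. At $\beta$ the minimum is attained and $M$ rises immediately to the right, so $\beta$ is a breakpoint of $f$ (or $\beta=e$) and, since $M\equiv h(\beta)$ just left of $\beta$ forces $h\ge h(\beta)$ there, the $f$–piece ending at $\beta$ is non–increasing in $h$, i.e.\ has slope $\le -c$. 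Mapping the piece to this abutting non–shallow piece is injective, because distinct constant stretches have distinct right endpoints and each piece of $f$ has a unique right endpoint. The two ranges—shallow versus non–shallow pieces—are disjoint, so the number of pieces of slope $>-c$ plus the number of pieces of slope $-c$ is at most $k$, i.e.\ $l\le k$.

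It remains to dispatch a few boundary cases, all routine once the picture of $M$ is fixed: a constant stretch abutting $e$ (its right endpoint maps to the rightmost piece of $f$), the degenerate split $x^{\ast}=b^f_i$ in which a shallow piece is swallowed entirely, and original pieces of slope exactly $-c$, which fall in the constant regime and are counted among the pieces of slope $-c$ rather than as formula–preserving ones. These cases are mechanical once the structure of $M$ is established, so the weight of the proof rests on that structure lemma together with the two counting injections.
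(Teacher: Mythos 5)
There is nothing in the paper to compare your argument against: Proposition~\ref{prop:minC_segment_preservation} (like Lemma~\ref{lemma:minC_piecewise_affine}) is asserted without proof, supported only by the surrounding prose and Example~\ref{ex:minC_operator}. Judged on its own, your proof is correct in all essentials. The normal form $\minC(f,c)(x)=M(x)-cx$ with $M(x)=\min_{y\in[x,e]}\bigl(f(y)+cy\bigr)$ is the right reduction; the two-regime description of $M$ over each piece of $f$ (follow $h$ on a left sub-interval of a shallow piece, otherwise stay constant at the future minimum) yields the slope bound $a^g_j\geqslant -c$, the formula preservation with $b^f_i\in I^g_j\subseteq I^f_i$, and the fact that all remaining pieces have slope exactly $-c$. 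You also correctly isolate the one genuinely delicate point: the split points $x^{*}$ are breakpoints of $g$ that are not breakpoints of $f$, so no naive breakpoint count gives $l\leqslant k$, and your two injections with disjoint ranges close this gap soundly --- in particular the observation that $M\leqslant h$ pointwise, with equality at the right endpoint $\beta$ of a maximal constant stretch, forces the $f$-piece ending at $\beta$ to have slope $\leqslant -c$, which covers $\beta=e$ as well. One detail worth spelling out in a full write-up: a piece of $g$ with slope $>-c$ cannot straddle two pieces of $f$, because it would then impose the same affine formula on two adjacent pieces of $f$, contradicting the minimality of $k$.

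A caveat that concerns the statement rather than your proof. What you prove is the reading consistent with the paper's prose and Example~\ref{ex:minC_operator}: every $g_j$ with $a^g_j>-c$ carries the formula of some $f_i$ and satisfies $b^f_i\in I^g_j\subseteq I^f_i$, and every other $g_j$ has slope exactly $-c$ (so the clause ``otherwise $a^g_j>-c$'' should in fact read ``otherwise $a^g_j=-c$''). The literal statement --- if the formula of $g_j$ equals that of \emph{any} $f_i$, then $b^f_i\in I^g_j\subseteq I^f_i$ --- is false: a slope-$(-c)$ piece of $g$ can coincide in formula with a slope-$(-c)$ piece of $f$ while its domain spans several pieces of $f$. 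For instance, take $f=\piecewise{f_1,f_2}$ with $f_1(x)=-2cx$ on $\bigl[0,\tfrac{1}{2}\bigr]$ and $f_2(x)=-cx-\tfrac{c}{2}$ on $\bigl[\tfrac{1}{2},1\bigr]$; then $\minC(f,c)=\piecewise{g_1}$ with $g_1(x)=-cx-\tfrac{c}{2}$ on all of $[0,1]$, whose formula equals that of $f_2$ although $I^g_1\not\subseteq I^f_2$. So your decision to count original slope-$(-c)$ pieces in the constant regime, rather than as formula-preserving ones, is not merely a bookkeeping convenience but necessary for the result to be true; it deserves an explicit remark.
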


For $\maxC$ we can prove a similar result, with the only difference
that in the statement of Prop.~\ref{prop:minC_segment_preservation}
$<$ and $\leqslant$ are substituted for $>$ and $\geqslant$.

\begin{example}
  \label{ex:minC_operator}
  Fig.~\ref{fig:ta_minc_operator_scheme} gives an intuitive
  understanding of the $\minC(f,c)$ operator, for a cost function $f$
  and a positive real constant $c$. The cost function is given as $f =
  \piecewise{f_1,\ldots,f_7}$ (as seen in
  Fig.~\ref{fig:ta_minc_operator_scheme}~a)). The slope of $f_2$ and
  $f_6$ is smaller than $-c$; for the remaining components it is
  greater. The cost function $g = \minC(f,c)$ is depicted in
  Fig.~\ref{fig:ta_minc_operator_scheme}~b), and is given by
  $\piecewise{g_1,\ldots, g_6}$. All components of $g$ have a slope greater
  or equal to $-c$. The formula for $g_1$ is the same as for $f_1$,
  however, the domain is a subset (similarly for $f_4$ and $g_4$). The
  function $g_3$ is equal to $f_3$ (similarly $g_6$ is equal to
  $f_7$). Functions $g_2$ and $g_5$, have the slope $-c$, and where
  not present in $f$.\qed
\end{example}

\begin{figure}
  \begin{center}
  \begin{tabular}{c p{2em}c}
    \includegraphics[scale=0.3]{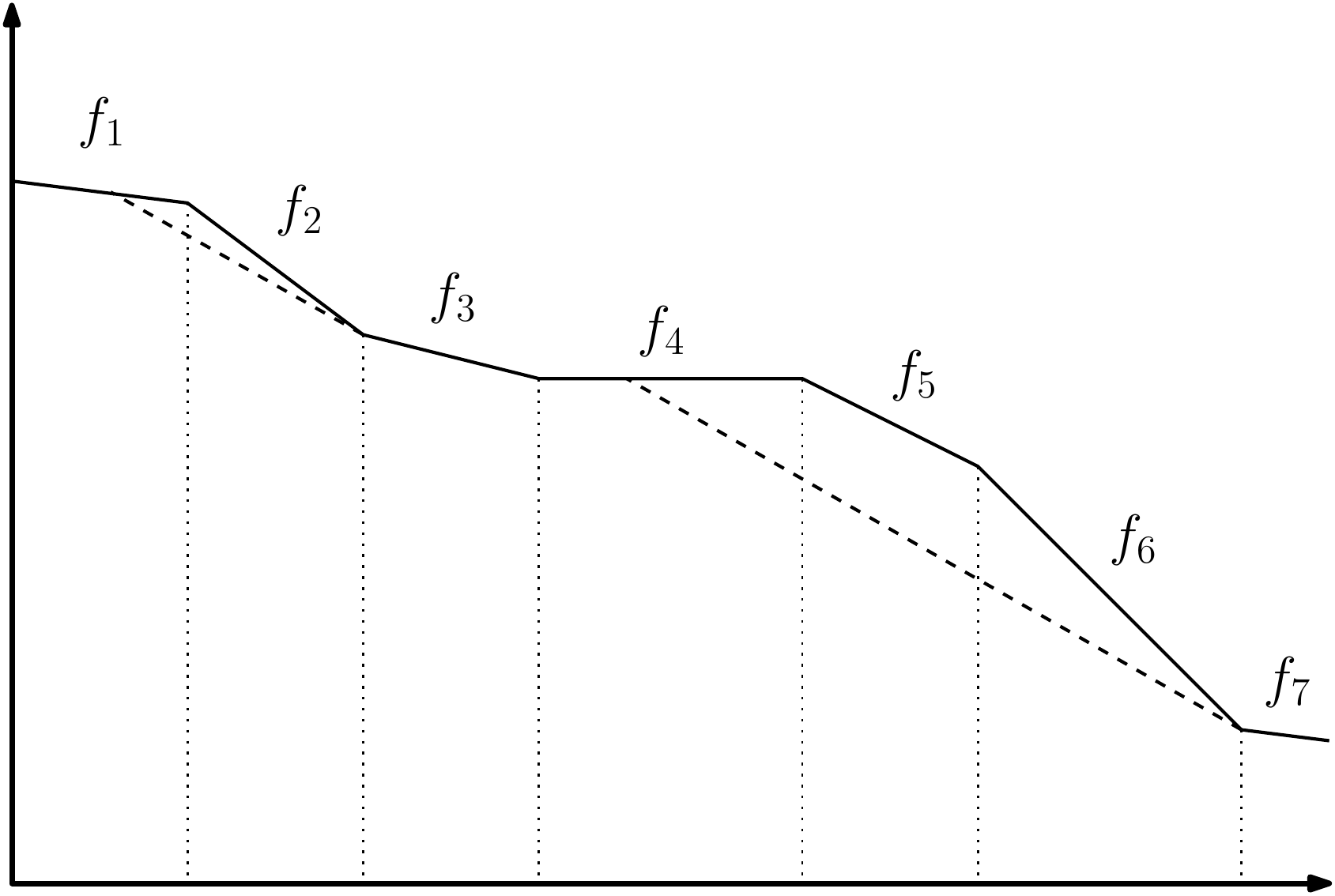} &&
    \includegraphics[scale=0.3]{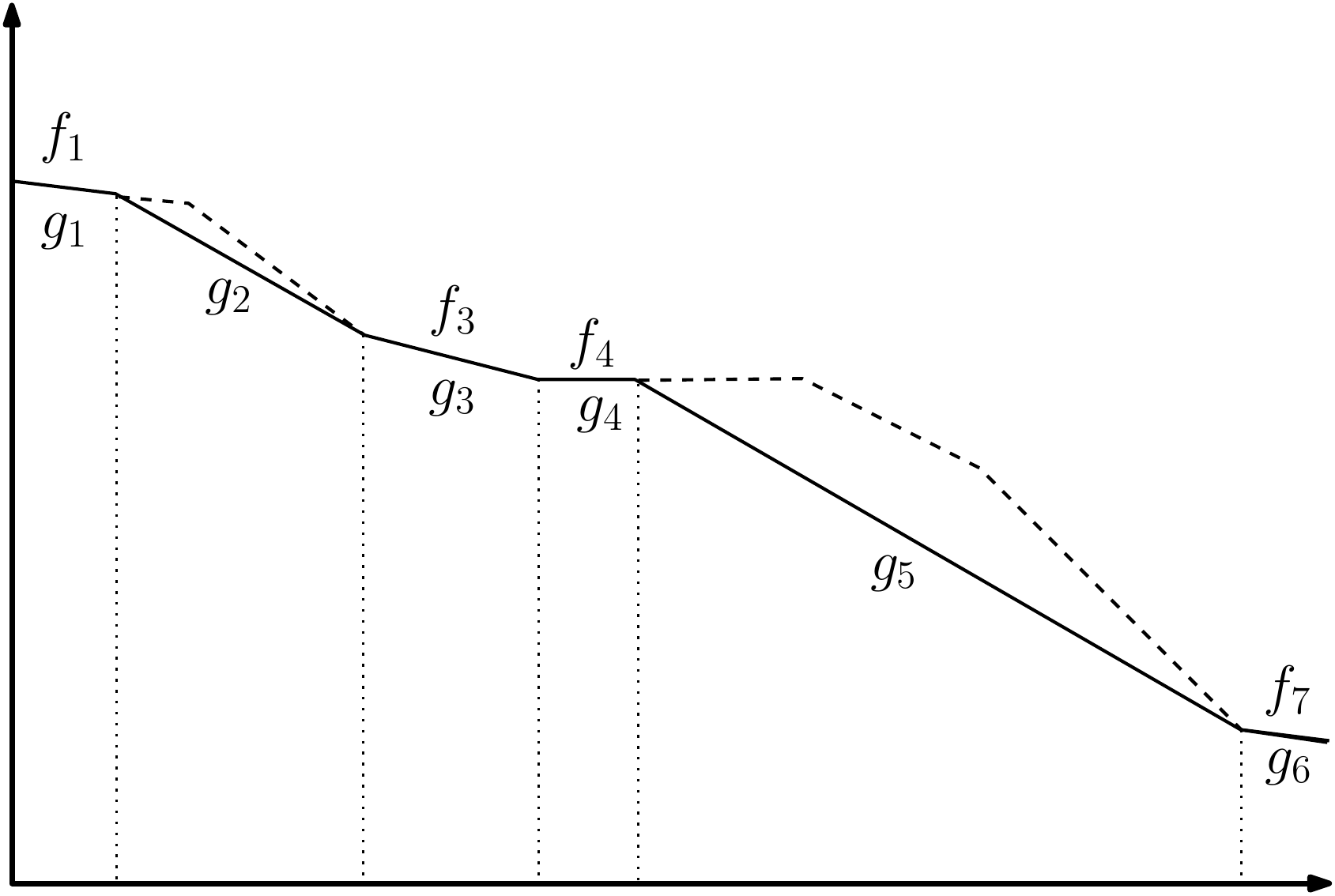}\\
    a) && b)
  \end{tabular}
  \end{center}
  \caption{a) Cost function $f$, before applying $\minC(f,c)$ operator
    --- dashed lines have a slope $-c$; b) cost function $g =
    \minC(f,c)$ --- dashed lines denote parts of $f$ that do not
    coincide with $g$.}
  \label{fig:ta_minc_operator_scheme}
\end{figure}

\paragraph{Reachability-price games.}
A reachability-price game is played on a transition system whose
states are partitioned between two players, the minimizer and the
maximizer. A game starts in some state, and the players change the
current state according to the transition rules, with the owner of the
state deciding which transition to take. The goal of the minimizer is
to reach a state in the designated set of goal states, whereas the
goal of the maximizer is to prevent this from happening. Each
transition incurs a price, and the minimizer, if she can assure that a
goal state is reached, wants to minimise the total price of doing
so. If the maximizer cannot prevent the minimizer from reaching a
goal state, then his goal is to maximise the total price of reaching
one.

A \emph{weighted labeled transition system}, or simply a
\emph{transition system}, $\Tt = \struct{S,\Lambda,\move{},\pi}$,
consists of a set of states, $S$, a set of labels, $\Lambda$, a
labeled transition relation $\move{}\;\subseteq S\times\Lambda\times
S$, and a price function $\pi$ that assigns a real number to every
transition. We will write $s\move{\lambda}s'$ to denote a transition,
i.e., an element $(s,\lambda,s')\in\;\move{}$. We will say that the
transition system is \emph{deterministic}, if the relation $\move{}$
can be viewed as a function $\move{}\; : S \times \Lambda \to S$.

\begin{remark}
  \label{rem:transitions_systems_deterministic}
  For the remainder of the paper we will be considering deterministic
  weighted labeled transition systems.\qed
\end{remark}

Weighted transition systems will be used to provide the semantics for
single-clock timed automata, considered in this paper. In this
context, the restriction made in
Rem.~\ref{rem:transitions_systems_deterministic} is not constraining,
as single-clock timed automata yield deterministic transition
systems. We place this restriction because it allows for simpler
definitions (e.g., we rely on this restriction when defining the
notion of a run induced by the players strategies).

A \emph{reachability-price game} $\Gamma =
\struct{\Tt,\SMin,\SMax,\SGoal,\fgoal}$ consists of a weighted
transition labeled system, $\Tt$, a partition of the transition system's
set of states, into the minimizer and maximizer states, $\SMax =
S\setminus \SMin$, a designated set of goal states, $\SGoal \subseteq
S$, and goal cost function, $\fgoal : \SGoal \to \real$.

Given a state $s$, a run of the game from $s$ is a (possibly
infinite) sequence of transitions $\omega = s_0 \move{\lambda_1} s_1
\move{\lambda_2} s_2 \cdots$, where $s = s_0$. If two runs $\omega =
s_0 \move{\lambda_1}\cdots\move{\lambda_k} s_k$ and $\omega' =
s_0'\move{\lambda_1}\cdots$ are such that $s_k = s_0'$ then,
$\omega\omega'$ denotes the run $s_0
\move{\lambda_1}\cdots\move{\lambda_k} s_k \move{\lambda_1'}\cdots$.
Given a finite run $\omega$, $\Length(\omega)$ will denote its length,
i.e., the total number of transitions, $\Last(\omega)$ will denote the
final state of $\omega$, i.e., $s_{\Length(\omega)}$, and $\omega_n$
will denote the prefix of $\omega$ of length $n$, where $n \leqslant
\Length(\omega)$.  The set of all runs of $\Gamma$ is denoted by
$\Runs$. The set of all finite runs of $\Gamma$ is denoted by
$\FRuns$. Note that $\FRuns \subseteq \Runs$. We will also write
$\Runs(s)$ ($\FRuns(s)$) to denote the set of all runs (all finite
runs) starting in a state~$s$.

A strategy of the minimizer is a partial function $\mu : \FRuns \to
\Lambda$ such that for every finite run $\omega$, ending in a state of
the minimizer, $Last(\omega) \move{\mu(\omega)} s'$. We will say that
$\mu$ is positional if it can be treated as a function $\mu : S\to
\Lambda$. We will write $\SigmaMin$ and $\PiMin$ to denote the sets of
all and all positional strategies of the minimizer, respectively. The
set of strategies for the maximizer is defined analogously.

Given a run $\omega'$ ending in a state $s_0$, and a pair of
strategies $\sigma \in \SigmaMin$ and $\chi \in \SigmaMax$, we write
$\Run(\omega',\mu,\chi)$ to denote the unique run $\omega \in
\Runs(s_0)$ satisfying: if $s_{i}\move{\lambda_{i+1}}s_{i+1}$ is the
$(i+1)$-th transition, of $\omega$, then $\mu(\omega'\omega_i) =
\lambda_i$ if $s_i \in \SMin$, otherwise, $\chi(\omega'\omega_i) =
\lambda_i$. Note that, if $\mu$ and $\chi$ are positional, then
$\omega'$ is irrelevant.

Given a finite run $\omega$ we define its price, $\Price(\omega)$, as
$\sum_{i=1}^{\Length(\omega)}\pi((s_{i-1},\lambda_i,s_i))$, i.e., the
total price of its transitions. Given a run $\omega \in \Runs$, let
$\Stop(\omega) = \min\set{i \sq s_i \in \SGoal}$. The cost of a run
$\omega$ is defined as: 
\begin{displaymath}
  \Cost(\omega) = \begin{cases} \fgoal(\Last(\omega_{\Stop(\omega)}))
    + \Price(\omega_{\Stop(\omega)}) & \Stop(\omega) < \infty,\\ \infty
    & \text{ otherwise.}
    \end{cases}
\end{displaymath}

We now define the function $\OptCost : S \to \real \cup
\eset{\infty}$, which maps every state to the minimum cost of reaching
a goal state that can be guaranteed by the minimizer. If the maximizer
can prevent the minimizer from achieving a goal state, the cost is
$\infty$. The function is defined as:
\[
  \OptCost(s) = \inf_{\mu \in \SigmaMin}\sup_{\chi \in
    \SigmaMax}\Cost(\Run(s,\mu,\chi)).
\]

Finally, we introduce the notion of $\varepsilon$-optimality, for
$\varepsilon > 0$. We say that $\mu \in \SigmaMin$ is
\emph{$\varepsilon$-optimal}, if $\sup_{\chi \in
  \SigmaMax}\Cost(\Run(\omega,\mu,\chi)) \leqslant
\OptCost(\Last(\omega)) + \varepsilon$ for all $\omega \in
\FRuns$. Given a strategy $\mu \in \SigmaMin$ we say that $\chi \in
\SigmaMax$ is \emph{$\varepsilon$-optimal for $\mu$}, if
$\Cost(\Run(\omega,\mu,\chi)) \geqslant \OptCost(\Last(\omega)) -
\varepsilon$, for all $\omega \in \FRuns$.

The decision problem associated with reachability-price games is the
following:

\begin{problem}
  \label{prob:decision_problem_reachability_price}
  Given a reachability-price game $\Gamma$, its state $s$, and a real
  constant $c$, determine whether $\OptCost(s) \leqslant c$.\qed
\end{problem}

If $\Tt$ or $\Gamma$ are not clear from the context we will write
$\OptCost_\Gamma$, $\Runs_\Gamma$, etc.

\paragraph{Single-clock timed automata.}
In this paper we are considering timed automata with a single
clock. We write $X = \eset{x}$ to denote the set containing the single
clock $x$. A clock constraint is given by a closed interval with
non-negative integer end points. We write $\Bb(X)$ to denote the set
of all clock constraints.  A clock valuation is a function that
assigns a non-negative real value to the clock $x$; $\Vv = [X \to
  \realnn]$ denotes the set of all single clock valuations.  A clock
valuation $v$ satisfies a clock constraint $g \in \Bb(X)$ if $v(x) \in
g$, and this will be denoted by $v\models g$. We write $v_0$ to denote
the $x \mapsto 0$ valuation. For a valuation $v$ and $t \in \realnn$
the valuation $v+t$ denotes the valuation $x \mapsto v(x) + t$.

A \emph{weighted single-clock timed automaton} $\Aa =
\struct{L,E,\eta,\urg,\pi}$ consists of a finite set of locations,
$L$, an edge relation, $E \subseteq L \times \Bb(X)
\times 2^X \times L$, an invariant specification, $\eta : L \to
\Bb(X)$, an urgency mapping, $\urg : L \to \eset{0,1}$, and weight
function, $\pi : L \cup E \to \nat$.

We assume (without loss of generality~\cite{BFH+01}) that $\Aa$ is
clock-bounded, i.e., there exists a positive constant $M$ such that
$v \models \eta(l)$ implies $v(x) \leqslant M$, for every location $l$.

The size of the automaton, denoted by $|\Aa|$, is the total number of
bits needed to represent all of its components --- constants are
encoded in binary.

The semantics of a timed automaton $\Aa$ is given in terms of a
deterministic weighted labeled transition system $\Tt_\Aa =
\struct{S_\Aa, \Lambda_\Aa, \move{}_\Aa, \pi_\Aa}$. The set of states
$S_\Aa \subseteq L \times \Vv$ is such that $v\models\eta(l)$ for
every $(l,v) \in S_\Aa$. The set of labels is given by $\Lambda_\Aa =
E \cup \realp$. The transition relation, $\move{}_\Aa$ admits a
transition $(l,v) \move{\lambda} (l',v')$ iff one of the following is
true:
\begin{description}
  \item[Discrete transition] $\lambda = (l,g,Z,l') \in E$, $v \models
    g$, and if $Z = \emptyset$ then $v=v'$, otherwise $v'=v_0$.
  \item[Continuous transition] $\lambda = t \in \realp$, $\urg(l) = 0$,
    i.e., the location is non-urgent, for every $t' \in (0,t)$ we have
    $v+t' \models \eta(l)$, $l=l'$, and $v' = v+t$.
\end{description}
Finally, the price function, $\pi_\Aa((l,v)\move{\lambda}(l',v'))$ is
defined as $\pi(\lambda)$ if $\lambda \in E$, and $\pi(l)\cdot\lambda$,
otherwise.

We will often abuse notation, and treat the state of the automaton as
an element of $L \times \realnn$, and the clock valuation as a real
variable.

\begin{remark}
  We only allow runs that do not admit infinitely many consecutive
  continuous transitions. Note that this requirement does not exclude
  Zeno runs, i.e., infinite runs whose total duration is finite.\qed
\end{remark}

\paragraph{Reachability-price games on single-clock timed automata.}
Fix a partition of the set of locations, $L = \LMin\setminus \LMax$,
into the minimizer and maximizer locations, the set of goal locations
$\LGoal \subseteq L$, and a function that assigns a cost function to
every goal location, $\afgoal : \LGoal\to\CostFun([0,M])$, where $M$
is the clock bound. We can define a reachability-price game on a
single-clock timed automaton $\Aa$, by defining a reachability-price
game on its transition system $\Tt_\Aa$. The reachability-price game
$\Gamma_\Aa$ is given by $\struct{\Tt_\Aa,\SMin,\SMax,\SGoal,\fgoal}$,
where: $\SMin = S \cap (\LMin \times \Vv)$, $\SMax = S\setminus
\SMin$, $\SGoal = S \cap (\LGoal \times \Vv)$, and $\fgoal((l,x)) =
\afgoal(l)(x)$ for every state $(l,x) \in \SGoal$.

The size of the game, denoted by $|\Gamma|$, is the total number
of bits needed to represent all of its components --- constants are
encoded in binary.

\paragraph{Assumptions.}
We are going to place some restrictions on the structure of timed
automata, which will allow us to concentrate on the essence of the
problem. In their work, Bouyer et al. place the same restrictions, and
argue that this is without loss of generality~\cite{BLMR06}. In
particular, their complexity results is stated only for the restricted
automata.

Consider an interval $I$, we will write $\Aa_I$, for some
\emph{$I$-bounded timed automaton}, i.e., an automaton whose
transition system has the state space restricted to $L \times I$, and
for every $l$, the invariant is $\eta(l)\cap I$. To obtain the
classical automaton we need to take $I = [0,\infty]$.

We say that an automaton $\Aa$ is \emph{simple} if it is
$[0,1]$-bounded and for every discrete transition of its transition
system, the reseting set is empty, i.e., for every $e = (l,g,Z,l') \in
E$ we have $Z = \emptyset$. Notice that in simple timed-automata time
always progresses.

We have the following result regarding simple single-clock timed automata.

\begin{theorem}
  Problem~\ref{prob:decision_problem_reachability_price} for
  reachability-price games on single-clock timed automata is
  polynomially Turing reducible to the analogous problem on simple
  single-clock timed automata.
\end{theorem}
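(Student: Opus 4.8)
The plan is to reduce a general single-clock timed automaton to a simple one by eliminating the two features that distinguish them: clock resets and the clock bound being arbitrary $[0,M]$ rather than $[0,1]$. The reduction must be polynomial in $|\Gamma|$ and must preserve the answer to Problem~\ref{prob:decision_problem_reachability_price}, i.e.\ it must preserve $\OptCost$ at the queried state up to the needed decision.

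First I would handle the clock bound. Since constants are integers bounded by $M$, the interesting clock values are partitioned by the integer points $0,1,\ldots,M$ into unit regions. The idea is to replace each location $l$ by $M$ copies $l^{(0)},\ldots,l^{(M-1)}$, where $l^{(j)}$ represents the automaton being in $l$ with the clock in the interval $[j,j+1]$, and to rescale the clock by subtracting $j$ so that each copy lives on $[0,1]$. A continuous transition that would have crossed an integer boundary inside $l$ is simulated by a zero-delay discrete edge from $l^{(j)}$ (at local clock value $1$) to $l^{(j+1)}$ (at local clock value $0$), carrying zero price; genuine discrete edges of $\Aa$ are copied between the appropriate region-copies, with guards intersected against each unit interval and translated by $-j$. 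The price rate $\pi(l)$ is unchanged by the translation, so the cost of continuous transitions is preserved, and goal cost functions $\afgoal(l)$ are split and translated accordingly. The number of locations grows by a factor of $M \le 2^{|\Gamma|}$, which is exponential, so I must be careful: to stay polynomial I should not materialize all $M$ copies as distinct integers but rather encode the region index as part of the location name, which costs only $O(\log M)$ bits per copy, giving a polynomial-size description. This bookkeeping — keeping the construction polynomially sized while $M$ is exponential — is a point to watch.

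Next I would eliminate resets. In a single-clock automaton a reset sets $x$ to $0$, so after a reset the continuation behaves exactly like a fresh run starting from clock value $0$ in the target location. The standard trick is that, because there is only one clock, the value of $\OptCost$ at $(l',0)$ is a well-defined quantity that can be precomputed or referred to self-consistently; a reset edge $e=(l,g,Z,l')$ with $Z\neq\emptyset$ can be replaced by a reset-free edge into a copy of $l'$ whose clock is reinterpreted as starting at $0$. Concretely, I would route every reset edge to the $[0,1]$-copy $l'^{(0)}$ introduced above, with the guard translated to local coordinates and the clock understood to begin at the origin of that copy; since the target copy already models clock values in $[0,1]$ from a zero origin, no explicit reset is needed. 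Prices are preserved because a reset edge carries its discrete price $\pi(e)$, which transfers unchanged.

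The correctness argument then proceeds by exhibiting a bijection between runs of $\Gamma_\Aa$ and runs of the constructed simple game $\Gamma'$ that preserves ownership, the goal-reaching predicate $\Stop$, and $\Price$, hence $\Cost$; from this the equality of $\OptCost$ values at corresponding states follows by matching minimizer and maximizer strategies across the two games. I expect the main obstacle to be the interaction of the two steps at integer boundaries: the zero-delay boundary edges and the reset edges both inject discrete transitions that did not exist in $\Aa$, and I must verify that these additions do not create new strategic possibilities — in particular that the maximizer cannot exploit the boundary edges to stall, and that the no-infinitely-many-consecutive-continuous-transitions requirement together with the fact that in simple automata time always progresses is respected. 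Establishing that $\Cost$ is preserved exactly across this refinement, and that the size blowup is genuinely polynomial once region indices are encoded compactly, is where the real work lies; the remaining verifications are routine.
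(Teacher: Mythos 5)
The central gap is your polynomiality claim. Your construction materializes $M$ copies of every location, and the description of the resulting simple game must list every copy's edges, invariant, and translated goal cost function, so its size is $\Omega(M)$ regardless of how few bits each copy's \emph{name} takes --- compact naming of the region index compresses nothing, because the edge relation and the cost functions still have to be written out once per copy. Since constants are encoded in binary, $M$ can be $2^{\Theta(|\Gamma|)}$, so what you describe is an exponential-size many-one reduction, not a polynomial reduction. This is precisely the construction of Bouyer et al.\ that the paper singles out (in its Discussion) as the source of one of the exponential blowups, and avoiding it is the entire point of this theorem. The wording of the statement --- polynomial \emph{Turing} reducibility --- is the tell: the intended argument does not build one big simple game and query it once; it makes multiple oracle calls. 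The workable route is to split $[0,M]$ only at the polynomially many integer constants that actually occur in guards and invariants (between two consecutive such constants no guard changes its truth value), affinely rescale each of those possibly-long intervals onto $[0,1]$ with the price rates multiplied by the interval length (a binary multiplication, so sizes stay polynomial), and then propagate the solution from the rightmost interval leftwards, feeding the values at each interval boundary back into the next game as goal cost functions in the style of the $\tcompose$ operation, recovering those values from the decision oracle by binary search.

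Your reset elimination also fails as stated. In your construction the local clock of $l'^{(0)}$ is the global clock with offset $0$, so an edge taken from $l^{(j)}$ at local value $x-j$ into $l'^{(0)}$ \emph{without} a reset lands at local clock $x-j$, not at $0$; the state reached misrepresents the post-reset configuration. Worse, no patch inside a single simple game is possible: simple automata have no resets and time always progresses, so the clock value along any run is monotone and can never return to $0$. Resets must therefore be handled \emph{across} games: cut each reset edge, treat the value $\OptCost(l',0)$ at its target as a constant to be determined, and resolve the resulting circularity with additional copies of the game and additional oracle calls --- this is the ``technique similar to that used to remove the resets'' that the paper alludes to when it removes discrete transition prices, and it is again inherently a Turing reduction rather than a one-shot automaton transformation. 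Your bijection-of-runs correctness argument is fine in spirit, but it is argued for a construction that is neither polynomial nor, in the reset case, faithful.
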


We simplify the automaton further, by assuming that the price of every
discrete transition is $0$. This assumption allows for a clearer
exposition, and is without loss of generality~\cite{BLMR06}.  A
technique similar to that used to remove the resets can be employed.
For simplicity, let $c \in \nat$ be the constant used in
Problem~\ref{prob:decision_problem_reachability_price}. For every
state $s$, we need to consider at most $c$ copies of the slightly
modified game $\Gamma$, which is played on a simple automaton with no
prices on discrete transitions. Intuitively, the $\OptCost$ function
for the $i$-th copy, gives the optimal cost of reaching goal, provided
that at most $i$ transitions with non-zero prices were executed. The
$\OptCost$ function computed for the $i$-th copy is used to construct
the $(i+1)$-th copy. Each copy is treated independently, and although
we might have to consider exponentially many, this does not increase
the complexity as our algorithm is in EXPTIME.

Simple timed automata admit three possible edge guards, namely
$[0,0]$, $[1,1]$, and $[0,1]$ (recall that we are considering only
closed intervals as clock constraints). The first kind does not allow
for a continuous transition, prior to a discrete one, and it is
satisfied only by finitely many states. As it will be visible in the
proofs of Sec.~\ref{sec:results}, the value of the $\OptCost$ function
for such states, due to the time progression property of simple timed
automata, does not ``affect'' the values for the other
states. Transitions with guards of this kind can be dealt with, in
polynomial time, during post-processing. The effect of a discrete
transition, featuring a guard of the second kind, can be encoded using
additional goal cost functions. Once again, proofs in
Sec.~\ref{sec:results} explain how this can be done.  It is only the
third kind of guards that cannot be dealt with by such simple
means. In light of this, and to simplify the presentation, we assume
that all transition guards are true. A similar approach was used in
the work of Bouyer et al.~\cite{BLMR06}.

\begin{remark}
  \label{rem:ta_transitions_have_no_guards}
  In the light of the assumptions made, it is natural to think of $E$
  as a subset of $L\times L$.\qed
\end{remark}

We will also assume that from every state the cost of reaching a goal
state is finite. In light of
Rem~\ref{rem:ta_transitions_have_no_guards}, one can determine the set
of states, from which the maximizer can prevent reaching goal, by
determining the appropriate set of locations; this can be done in
polynomial time. The real complexity lies in determining the optimal
cost of reaching a goal state, given that the minimizer can ensure it.

\paragraph{Operations.}
We will now define some simple algebraic operations that we will be
performing on cost functions, and reachability-price games on simple
timed automata.  These operations will be used in the algorithm,
presented in Sec.~\ref{sec:results}.

Given two functions $h : I_1 \to \real$ and $g : I_2\to\real$, we will
write $f \override g$ to denote the \emph{override} operation on these
two functions~\cite{BJSV10}, defined as $(h \override g)(x) = h(x)$ if $x
\in I_1$ and $g(x)$ if $x \in I_2 \setminus I_1$.

Fix an interval $I \subseteq [0,1]$, an automaton $\Aa$, and a
reachability-price game $\Gamma$, on $\Aa$. Below we list three
operations, that given a game $\Gamma$, produce a new game:
\begin{description}
  \item[] $\Gamma[\urg(l):=1]$ denotes the game $\Gamma'$ obtained from $\Gamma$
    by changing the urgency mapping of $\Aa$ so that $l$ is an urgent location.
  \item[] $\Gamma[\LGoal \cup l,h]$ denotes the game obtained from
    $\Gamma$ by adding $l$ to the set of goal locations, with $h$
    being the cost function assigned to $l$. It gives the game
    $\Gamma'$, obtained from $\Gamma$, by setting ${\LGoal}'=\LGoal
    \cup \eset{l}$, and defining the mapping from goal locations to
    cost functions, $\afgoal'$, as $(l\mapsto h)\override \afgoal$.
    Function $h : I \to \real$ is a cost function, and $l$ is a
    location.  We do not require $l \in L$, i.e., $l$ can be a fresh
    location.
  \item[] $\Gamma[E \cup e]$ the game obtained from $\Gamma$ by adding
    an additional edge $e$ in the automaton $\Aa$. The new edge set is
    equal to $E \cup \eset{e}$, where $e \in L \times L$.
\end{description}

%
%
\section{Results}
\label{sec:results}
We are interested in solving reachability-price games algorithmically.
To solve a reachability-price game $\Gamma$ means to compute the
$\OptCost$ function. In this section we present an algorithm
for computing this function. We start by introducing some preliminary
notions, then we present the algorithm, and to conclude this section we
provide a proof of its correctness.
The algorithm extends the work of Bouyer et al. \cite{BLMR06}. With
each recursive call, it attempts to solve a game with one less
non-urgent location. The problem is polynomial-time solvable, when
only urgent locations are present~\cite{BLMR06}.

In the following we will be considering a game $\Gamma$, and games
derived from it, $\Gamma'$ and $\Gamma''$. Furthermore, due to the
iterative nature of our algorithm, we will often restrict the game to
an interval, $I$. To ensure clarity, we will be writing
$\OptCost_{\Gamma_I}$ to explicitly indicate the game $\Gamma$ and the
interval $I$, to which the function refers. Unlike clock constraints,
the interval $I$ will usually have rational endpoints.  

At times, it will be convenient to treat $\OptCost$ as an element of
$[\Loc\to\CostFun(I)]$, rather than an element of $[\States \to
  \real]$. We will therefore abuse the notation, and write
$\OptCost(l)$ to denote the function $x\mapsto \OptCost(l,x)$.

To make handling of non-urgent locations easy, we introduce the
following definition:
\[
  \nonurg(\Gamma) = \set{l\sq l \in\Loc\setminus\LGoal \text{ and }
    \urg(l) = 0}
  \]

Fix a game $\Gamma$ and two intervals $I_1 = [b_1,e_1],I_2 =
[b_2,e_2]$ such that $r = e_1 = b_2$. We would like to have a way of
computing $\OptCost_{\Gamma_{I_1 \cup I_2}}$, provided that we have
already computed $\OptCost_{\Gamma_{I_2}}$. To enable this we define
the following operation:
\begin{multline*}
  \tcompose(\Gamma_{I_1},\OptCost_{\Gamma_{I_2}}) =\\ \Big(\Gamma\big[\LGoal
    \cup l_{1}',\;x\mapsto
    \OptCost_{\Gamma_{I_2}}(l_1,r)+(r-x)\pi(l_1)\big]\big[E \cup
    (l_1,l_{1}')\big]\ldots\\ \big[\LGoal \cup l_{k}',\;x\mapsto
    \OptCost_{\Gamma_{I_2}}(l_k,r)+(r-x)\pi(l_k)\big]\big[E \cup
    (l_k,l_{k}')\big]\Big)_{I_1},
\end{multline*}
where $\eset{l_1,\ldots,l_k} = \nonurg(\Gamma)$.

The intuition behind the $\tcompose$ operation is as follows.  In
$\Gamma_{I_1}$, the time cannot progress past $e_1$, whereas in
$\Gamma_{I_1 \cup I_2}$ it can; this results in
$\OptCost_{\Gamma_{I_1}}$ being unrelated to
$(\OptCost_{\Gamma_{I_1\cup I_2}})_{| \Loc \times I_{I_1}}$, although,
due to the lack of resets, $\OptCost_{\Gamma_{I_2}}$ is equal to
$(\OptCost_{\Gamma_{I_1\cup I_2}})_{| \Loc \times I_{I_2}}$. To
alleviate this, for every non-urgent location $l$, we add a new goal
location $l'$ whose cost functions encodes the following behavior:
upon entering $l$ wait until time $e_1$, and then reach goal, from the
state $(l,e_1)$, ``optimally'' as if $\Gamma_{I_2}$ was the game being
played. This intuition is formalized by the following lemma.

\begin{lemma}
  \label{lem:timed_composition}
  If $\Gamma_{I_1}' = \tcompose(\Gamma_{I_1},\OptCost_{\Gamma_{I_2}})$
  then
  \[
    \OptCost_{\Gamma_{I_1}'} \override \OptCost_{\Gamma_{I_2}} =
    \OptCost_{\Gamma_{I_1 \cup I_2}}.
    \]
\end{lemma}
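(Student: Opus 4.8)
The plan is to prove that the two functions agree pointwise on $\Loc \times (I_1 \cup I_2)$ by splitting the domain at the shared endpoint $r = e_1 = b_2$ and treating each piece separately, matching the two branches of the override. On the right piece $\Loc \times (I_2 \setminus I_1)$ the override returns $\OptCost_{\Gamma_{I_2}}$, so there I must show that $\OptCost_{\Gamma_{I_1 \cup I_2}}$ restricts to $\OptCost_{\Gamma_{I_2}}$; on the left piece $\Loc \times I_1$ the override returns $\OptCost_{\Gamma'_{I_1}}$, so there I must show $\OptCost_{\Gamma_{I_1 \cup I_2}} = \OptCost_{\Gamma'_{I_1}}$. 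The conceptual heart of both is a subgame-substitution principle: once the clock reaches $r$ at a location $l_i$ it can never decrease again (resets are empty in a simple automaton), so every remaining play is confined to $\Loc \times I_2$ and its value from $(l_i,r)$ onward is exactly $\OptCost_{\Gamma_{I_2}}(l_i,r)$, independently of how $(l_i,r)$ was reached.

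For the right piece I would argue that from any state $(l,x)$ with $x \geqslant r$ every run of $\Gamma_{I_1 \cup I_2}$ has a non-decreasing clock and therefore stays inside $\Loc \times I_2$; hence the two games induce the same runs, the same strategies, and the same $\Cost$ from such states, so $\OptCost_{\Gamma_{I_1 \cup I_2}}(l,x) = \OptCost_{\Gamma_{I_2}}(l,x)$. This both settles the right piece and supplies the $I_2$-value that serves as the terminal value in the argument for the left piece.

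For the left piece I would set up a cost-preserving correspondence between plays of $\Gamma_{I_1 \cup I_2}$ started in $\Loc \times I_1$ and plays of $\Gamma'_{I_1}$. A run from $(l,x)$ with $x < r$ either reaches goal while its clock is still below $r$ — in which case it is verbatim a run of $\Gamma'_{I_1}$ of the same cost, since the fresh locations $l_i'$ and edges $(l_i,l_i')$ are merely unused extra options — or its clock first attains $r$ at some $l_i \in \nonurg(\Gamma)$, because a crossing requires a continuous transition and hence a non-urgent location, while a goal location would already have terminated the run. In the latter case I match it with the run of $\Gamma'_{I_1}$ that agrees up to $l_i$ and then takes the edge $(l_i,l_i')$ into the fresh goal. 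The costs coincide exactly: in $\Gamma_{I_1\cup I_2}$ the tail is a wait from the current clock value $y$ up to $r$, contributing $(r-y)\pi(l_i)$, followed by the optimal $I_2$-continuation worth $\OptCost_{\Gamma_{I_2}}(l_i,r)$ by the right-piece result; in $\Gamma'_{I_1}$ the edge is a zero-price discrete transition taken at clock $y$, and the goal cost function of $l_i'$ evaluates to $\OptCost_{\Gamma_{I_2}}(l_i,r)+(r-y)\pi(l_i)$, the same number. Since the owner of $l_i$ controls both the decision to let time pass beyond $r$ and the decision to take $(l_i,l_i')$, this correspondence respects the partition of states, so I can turn it into a two-sided argument on values: given an $\varepsilon$-optimal minimizer strategy in one game I transport it to the other along the correspondence and splice in, past the crossing, an $\varepsilon$-optimal strategy for the $\Gamma_{I_2}$ subgame; letting $\varepsilon \to 0$ yields equality of $\OptCost$ on $\Loc \times I_1$.

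The main obstacle is the careful handling of the crossing point $r$ together with the inf-sup bookkeeping around it. Three points need attention. First, a continuous transition may overshoot $r$ without ever producing the state $(l_i,r)$, so I must justify the cost-neutral normalisation that inserts an explicit stop at $r$ (the price is linear in duration, so a single step $y \to y+t$ and the split $y \to r \to y+t$ have equal cost). Second, the substitution step replaces the genuine two-player $I_2$-subgame by the scalar $\OptCost_{\Gamma_{I_2}}(l_i,r)$, which is only realised up to $\varepsilon$ by concrete strategies, so both the minimizer's and the maximizer's $\varepsilon$-optimal strategies in $\Gamma_{I_2}$ must be spliced in at the crossing with uniformly controlled error. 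Third, at $x = r$ the override returns the left value $\OptCost_{\Gamma'_{I_1}}(l,r)$, which I must check equals $\OptCost_{\Gamma_{I_2}}(l,r)$; this follows from the left-piece argument evaluated at $r$ (the edge $(l_i,l_i')$ then contributes $(r-r)\pi(l_i)=0$) together with continuity of cost functions, so the two pieces glue without a jump. Everything else reduces to additivity of $\Price$ and linearity of the waiting cost in the clock value.
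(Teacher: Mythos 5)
The paper gives no proof of this lemma: it is stated bare, and its only support is the intuition paragraph immediately preceding it (clock values never decrease in a simple automaton, so $\OptCost_{\Gamma_{I_2}}$ agrees with the restriction of $\OptCost_{\Gamma_{I_1\cup I_2}}$ to $\Loc\times I_2$; and each fresh goal location $l_i'$ encodes ``wait at $l_i$ until $r$, then continue optimally as in $\Gamma_{I_2}$''). Your proposal is correct, and it is precisely a formalization of that intuition, so there is no divergence of approach to report --- you have supplied the argument the paper omits. The three subtleties you flag are exactly the right ones: normalizing an overshooting wait by inserting the state $(l_i,r)$ (cost-neutral by linearity of the waiting price) is what reduces every crossing play to a play of the $I_2$-subgame from $(l_i,r)$; replacing that subgame by the scalar $\OptCost_{\Gamma_{I_2}}(l_i,r)$ is what requires the two-sided splicing of $\varepsilon$-optimal strategies (note that with the paper's definitions the maximizer's $\varepsilon$-optimality is relative to a fixed minimizer strategy, which is exactly the form your splice needs, and that clock monotonicity guarantees each play crosses $r$ at most once, so the spliced error is incurred only once); and the boundary $x=r$ is where the override's left branch must be matched against $\OptCost_{\Gamma_{I_2}}(\cdot,r)$. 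Two cosmetic points you should add in a full write-up: your dichotomy for the left piece omits the runs that neither reach a goal nor cross $r$, which have cost $\infty$ in both games and are literally shared by them; and the stated equality is to be read over $\Loc\times(I_1\cup I_2)$, the fresh locations $l_i'$ lying outside the domain of $\OptCost_{\Gamma_{I_1\cup I_2}}$.
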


We will also be considering situations where we have already computed
$\OptCost_\Gamma(l)$ for some location $l$ of $\Gamma$,
and we will want to use this fact to compute $\OptCost_\Gamma$ for the
remaining locations.

\begin{lemma}
  \label{lem:one_location_computed}
  Given a game $\Gamma$ over an interval $I$, a location $l$, and a
  cost function $h : I \to \real$, if $h(x) = \OptCost_{\Gamma}(l,x)$
  for every $x \in I$ then
  \[
    \OptCost_{\Gamma[\LGoal\cup l,h]}(l',x) = \OptCost_\Gamma(l',x),
    \]
  for every location $l' \in L$ and every clock valuation $x \in I$.
\end{lemma}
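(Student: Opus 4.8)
The plan is to prove the two inequalities $\OptCost_{\Gamma'}(s)\leqslant\OptCost_\Gamma(s)$ and $\OptCost_{\Gamma'}(s)\geqslant\OptCost_\Gamma(s)$ separately, where $\Gamma'=\Gamma[\LGoal\cup l,h]$ and $s=(l',x)$ is an arbitrary state. The starting observation is that $\Gamma$ and $\Gamma'$ share the same transition system, the same partition $\SMin,\SMax$, and the same price function; they differ only in that $\Gamma'$ has the extra goal states $\eset{l}\times I$ with goal cost $h$. Consequently every run and its $\Price$ are common to both games, every strategy of one game is a strategy of the other, and $\Stop_{\Gamma'}(\omega)\leqslant\Stop_\Gamma(\omega)$. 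If a run $\omega$ first visits $l$ at step $n$ in state $(l,x_n)$ before reaching any original goal, then $\Cost_{\Gamma'}(\omega)=\Price(\omega_n)+h(x_n)=\Price(\omega_n)+\OptCost_\Gamma(l,x_n)$ by hypothesis; if instead $\omega$ reaches an original goal before visiting $l$, then $\Stop_{\Gamma'}(\omega)=\Stop_\Gamma(\omega)$ and $\Cost_{\Gamma'}(\omega)=\Cost_\Gamma(\omega)$. The cases $l'=l$ (immediate, since a goal state's optimal cost is its goal cost) and $l\in\LGoal$ (then $h=\fgoal(l)$, so $\Gamma'=\Gamma$) are trivial, so I assume $l'\neq l\notin\LGoal$ throughout.

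For $\leqslant$ I fix $\varepsilon>0$, take an $\varepsilon$-optimal minimizer strategy $\mu$ for $\Gamma$, and use the same $\mu$ in $\Gamma'$. Against any maximizer strategy $\chi'$ of $\Gamma'$ (equally a strategy of $\Gamma$), the run $\omega=\Run(s,\mu,\chi')$ reaches an original goal in $\Gamma$ (the global finiteness assumption together with $\varepsilon$-optimality excludes non-reaching), hence reaches a $\Gamma'$-goal as well. If an original goal comes first, then $\Cost_{\Gamma'}(\omega)=\Cost_\Gamma(\omega)\leqslant\OptCost_\Gamma(s)+\varepsilon$. If $l$ comes first at $(l,x_n)$, I splice: extend $\chi'$ past step $n$ by a maximizer strategy that, against the residual of $\mu$ from $(l,x_n)$, forces cost at least $\OptCost_\Gamma(l,x_n)-\varepsilon$ — such a strategy exists because the residual of $\mu$ is a particular minimizer strategy, so its $\sup$-value is at least $\OptCost_\Gamma(l,x_n)$. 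Additivity of $\Price$ across the splice then gives $\Cost_{\Gamma'}(\omega)=\Price(\omega_n)+\OptCost_\Gamma(l,x_n)\leqslant\Cost_\Gamma(\Run(s,\mu,\chi))+\varepsilon\leqslant\OptCost_\Gamma(s)+2\varepsilon$, where $\chi$ is the spliced maximizer strategy. Taking the supremum over $\chi'$ and letting $\varepsilon\to0$ yields the inequality.

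For $\geqslant$ I show that against any minimizer strategy $\mu'$ of $\Gamma'$ the maximizer can force, in $\Gamma'$, cost at least $\OptCost_\Gamma(s)-2\varepsilon$. I build a minimizer strategy $\mu$ for $\Gamma$ that copies $\mu'$ until $l$ is first reached and thereafter follows an $\varepsilon$-optimal strategy from $(l,x_n)$. Since $\OptCost_\Gamma(s)\leqslant\sup_\chi\Cost_\Gamma(\Run(s,\mu,\chi))$, there is a maximizer strategy $\chi$ with $\Cost_\Gamma(\Run(s,\mu,\chi))\geqslant\OptCost_\Gamma(s)-\varepsilon$. Using $\chi'=\chi$ in $\Gamma'$, the runs $\Run(s,\mu',\chi')$ and $\Run(s,\mu,\chi)$ agree up to the first visit of $l$ or of an original goal: the original-goal case gives equal costs directly, and the $l$-first case gives $\Cost_\Gamma(\Run(s,\mu,\chi))\leqslant\Price(\omega_n)+\OptCost_\Gamma(l,x_n)+\varepsilon=\Cost_{\Gamma'}(\Run(s,\mu',\chi'))+\varepsilon$ by $\varepsilon$-optimality of the spliced continuation. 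Either way $\Cost_{\Gamma'}(\Run(s,\mu',\chi'))\geqslant\OptCost_\Gamma(s)-2\varepsilon$, and letting $\varepsilon\to0$ finishes.

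The main obstacle, and the step deserving the most care, is the strategy-splicing at $l$: I must define the composite strategies so that the run decomposes cleanly at the first visit to $(l,x_n)$, argue that $\Price$ and $\Cost$ are additive across this split (the continuation being evaluated as a fresh run from $(l,x_n)$ with zero accumulated price), and invoke exactly the right optimality inequality on each side — the residual of a minimizer strategy has $\sup$-value $\geqslant\OptCost_\Gamma(l,x_n)$ for the $\leqslant$ direction, and an $\varepsilon$-optimal continuation has $\sup$-value $\leqslant\OptCost_\Gamma(l,x_n)+\varepsilon$ for the $\geqslant$ direction. Keeping the two ``who is reached first, $l$ or an original goal'' cases separate throughout, and using the finiteness assumption to rule out runs that never reach a goal, are the remaining bookkeeping points.
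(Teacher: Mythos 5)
Your proof is correct, but it takes a genuinely different route from the paper. The paper offers no standalone argument for Lemma~\ref{lem:one_location_computed}: it simply declares the lemma ``a direct consequence'' of Lemma~\ref{lem:OptCost_opt_eq}, i.e.\ of the local characterization $\OptCost(l) = \minC(h,\pi(l))$ (resp.\ $\maxC$ for maximizer locations) in terms of successor values. On that route one observes that turning $l$ into a goal location with cost function $\OptCost_\Gamma(l)$ leaves the right-hand sides of these equations unchanged at every other location, and concludes that the solution is unchanged --- an argument that implicitly relies on these local equations determining $\OptCost$ uniquely, a fact the paper never spells out. You instead work directly from the definition of $\OptCost$ as an $\inf\sup$: you prove the two inequalities separately by transferring strategies between $\Gamma$ and $\Gamma' = \Gamma[\LGoal\cup l,h]$ (which share runs, prices, and strategy sets), splicing $\varepsilon$-optimal continuations at the first visit to $l$, invoking the residual-strategy bound $\sup_{\chi}\Cost \geqslant \OptCost_\Gamma(l,x_n)$ in one direction and an $\varepsilon$-optimal continuation in the other, and using the finiteness assumption to rule out non-reaching runs in the $\leqslant$ direction. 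This is precisely the style of argument the paper itself deploys later, in the proof of Lemma~\ref{lem:case_3_partially_correct}, so it fits the paper's toolkit. What each approach buys: the paper's route is a one-liner given the cost-function machinery of Lemma~\ref{lem:OptCost_opt_eq}, but it hides the uniqueness issue; your route is longer and requires careful bookkeeping of residual strategies, cost additivity across the splice, and the degenerate cases ($l'=l$, $l\in\LGoal$, runs hitting an original goal first), all of which you handle or explicitly flag --- in exchange it is self-contained and arguably closes the gap that the paper's ``direct consequence'' leaves open.
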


Lemma \ref{lem:one_location_computed} is a direct consequence of the
following Lemma, which characterizes the relation between the values of
optimal reachability cost of adjacent locations.

\begin{lemma}
  \label{lem:OptCost_opt_eq}
  Given $l \in \LMin$, let $h(x) = \min\set{\OptCost(l',x) \sq (l,l')
    \in E}$, we then have
  \[
    \OptCost(l) = \minC(h,\pi(l))
    \]
  If $l \in \LMax$ then, if we substitute $\max$ for $\min$ and $\maxC$
  for $\minC$, the same equality holds.
\end{lemma}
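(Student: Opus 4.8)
The plan is to prove the two matching inequalities $\OptCost(l) \leqslant \minC(h,\pi(l))$ and $\OptCost(l) \geqslant \minC(h,\pi(l))$ (and their duals for $l \in \LMax$), reading the identity as a Bellman-style optimality equation for the one-step behaviour at location $l$. The whole argument rests on a single structural observation about simple automata: from a state $(l,x)$, as long as the play remains in $l$ only continuous transitions can occur, and since $l$ is non-goal while infinitely many consecutive continuous transitions are forbidden, every run either stays in $l$ forever with cost $\infty$, or leaves $l$ after some net waiting time $t$, reaching $(l',x+t)$ with $(l,l') \in E$ and accumulated price exactly $\pi(l)\cdot t$ (continuous prices add linearly and discrete prices are $0$). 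Moreover $\Cost$ splits additively at this exit: writing $\omega$ for the prefix spent inside $l$, $\Cost(\omega\omega') = \Price(\omega) + \Cost(\omega') = \pi(l)t + \Cost(\omega')$. I would first record these facts.

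For $l \in \LMin$ and the inequality $\OptCost(l,x) \leqslant \minC(h,\pi(l))(x)$: since $h$ is continuous, being a minimum of finitely many continuous functions $\OptCost(l',\cdot)$, on the compact interval $[x,e]$ the value $\min_{0 \leqslant t \leqslant e-x}\pi(l)t + h(x+t)$ is attained at some $t^*$, with $h(x+t^*) = \OptCost(l'^*,x+t^*)$ for a suitable successor $l'^*$. Fix $\varepsilon>0$ and build the minimizer strategy that waits $t^*$, jumps to $l'^*$, and then follows a strategy that is $\varepsilon$-optimal from $(l'^*,x+t^*)$ (such a strategy exists because $\OptCost$ is an infimum). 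Against any $\chi \in \SigmaMax$ the maximizer is silent inside $l$, so the run realises the exit $(t^*,l'^*)$, and the additive split bounds the cost by $\pi(l)t^* + \OptCost(l'^*,x+t^*) + \varepsilon = \minC(h,\pi(l))(x) + \varepsilon$; letting $\varepsilon \to 0$ finishes this direction.

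For the reverse inequality $\OptCost(l,x) \geqslant \minC(h,\pi(l))(x)$ I would argue without $\varepsilon$. Fix any $\mu \in \SigmaMin$. If $\mu$ never leaves $l$ the cost is $\infty$ and there is nothing to prove; otherwise the in-$l$ play, controlled entirely by $\mu$, determines an exit $(t,l')$. Using the additive split together with the elementary fact that, for the fixed continuation of $\mu$, one has $\sup_{\chi'}\Cost \geqslant \inf_{\mu''}\sup_{\chi'}\Cost = \OptCost(l',x+t)$, I obtain $\sup_{\chi \in \SigmaMax}\Cost(\Run((l,x),\mu,\chi)) \geqslant \pi(l)t + \OptCost(l',x+t) \geqslant \pi(l)t + h(x+t) \geqslant \minC(h,\pi(l))(x)$. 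As this holds for every $\mu$, taking the infimum over $\SigmaMin$ yields the claim. The case $l \in \LMax$ is entirely dual: the maximizer now controls the behaviour inside $l$, so the roles of the two players, and of $\inf$/$\sup$, $\min$/$\max$, and $\minC$/$\maxC$, are exchanged, and the same two steps give $\OptCost(l) = \maxC(h,\pi(l))$.

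The step I expect to be the main obstacle is the structural reduction of the in-$l$ play to a single exit move, i.e., making rigorous that a possibly lengthy, strategy-dependent sequence of continuous transitions followed by one discrete transition behaves exactly like a single net wait of duration $t$ with price $\pi(l)t$, and that this is where the assumption forbidding infinitely many consecutive continuous transitions is essential, together with the clean additive decomposition of $\Cost$ and $\Stop$ at the exit point. Everything else is either a compactness argument for the attainment of the minimum or a direct manipulation of the inf-sup definition of $\OptCost$.
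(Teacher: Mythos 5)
The paper states Lemma~\ref{lem:OptCost_opt_eq} without any proof, so there is no official argument to compare yours against; the closest relative in the paper is the proof of Lemma~\ref{lem:case_3_partially_correct}, which uses precisely your ingredients ($\varepsilon$-optimal strategies, additive decomposition of $\Cost$ at the exit from a location, and the price-rate comparison), and your Bellman-style two-inequality argument is the natural way to discharge the lemma. Your $\LMin$ case is sound: since the minimizer owns every choice inside $l$, the sojourn and the exit $(t,l')$ are determined by $\mu$ alone, the split $\Cost(\omega\omega') = \pi(l)\,t + \Cost(\omega')$ is valid because no goal state occurs before the exit and discrete transitions are free, and the two inequalities follow as you describe. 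One caveat: your appeal to continuity of $h$ (needed both for attainment of $t^*$ and for the $\min$ in $\minC(h,\pi(l))$ to be meaningful at all) presupposes that each $\OptCost(l',\cdot)$ is continuous. That is not known a priori; it is exactly the paper's ``working assumption'' that $\OptCost$ is locationwise a cost function, and in the algorithm the lemma is only applied to games where this has been established for the successors. State it as an explicit hypothesis (or replace attainment by an $\varepsilon$-selection of $t^*$), otherwise the argument is circular when read standalone.

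The genuine soft spot is your claim that the $\LMax$ case is ``entirely dual.'' It is not, and the asymmetry sits exactly at the branch you dismiss with ``nothing to prove.'' In the $\LMin$ lower bound, a $\mu$ that never leaves $l$ yields cost $\infty$, which only helps the inequality being proved. The dual situation arises in the direction $\OptCost(l,x) \leqslant \maxC(h,\pi(l))(x)$: a $\chi$ that never leaves $l$ would also force cost $\infty$, and that \emph{destroys} this inequality whenever $h$ is finite (which the paper's standing assumption that goal is reachable from every state guarantees). So in the maximizer case you must positively exclude infinite dwelling rather than shrug it off: by the paper's convention forbidding infinitely many consecutive continuous transitions (together with the absence of discrete self-loops at $l$, without which even that convention does not suffice), every admissible run eventually exits $l$, after total dwell time at most $e-x$ by clock-boundedness, and only then does your one-exit reduction apply to $\sup_{\chi \in \SigmaMax}$. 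You do flag the run restriction as ``essential'' in your structural observation, but your stated dichotomy keeps ``stays in $l$ forever with cost $\infty$'' as a live possibility for runs, which is precisely what the maximizer direction cannot tolerate; making this exclusion explicit is what the dual case needs beyond swapping the roles of the players and of $\minC$/$\maxC$.
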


\paragraph{Algorithm.}
We will define a recursive function $\solverp$ that solves a
reachability-price game $\Gamma_I$, where $I = [b,e] \subseteq [0,1]$
and the automaton underlying $\Gamma$ is simple. Upon termination, the
function outputs $\OptCost_{\Gamma_I}$.

The algorithm works recursively, with respect to the set of non-urgent
locations. During each recursive call, it identifies a non-urgent
location that minimises the weight function. There are two cases to
consider, depending on the ownership of the location, however, both of
them are handled in a similar fashion. The algorithm modifies the game
$\Gamma$ to have one less non-urgent location. In case $l \in \LMax$,
we convert $l$ to be urgent, whereas if $l \in \LMin$, we convert $l$
to be a goal location that captures the following behaviour: once $l$
is reached in $\Gamma$, the minimizer spends all available time there.
The intuition behind this is as follows: if $l \in \LMax$, it is
unlikely that spending time in that location will be beneficial for
the maximizer. Likewise, when $l \in \LMin$, it is likely that it will
be beneficial for the minimizer to stay as long as possible.  There
are cases, however, when this intuition is incorrect, i.e., it is
beneficial, respectively, for the maximizer to wait, and for the
minimizer to move immediately. This necessitates the iterative
procedure, outlined in the following, employed during each recursive
call.

The working assumption is that $\OptCost$ is, locationwise, a cost
function.  During each recursive call, the algorithm iteratively
computes the result of the $\minC$ ($\maxC$) operator applied to the
minimum (maximum) of the location's successor's cost functions (that
are equal to $\OptCost$).  The iterative procedures in cases 2 and 3
of the algorithm compute the solution over a sequence of intervals,
proceeding from the left to the right of the time axis. They first
assume that the aforementioned intuition is correct (step 1), and then
identify the rightmost interval, over which it is not (step 2). The
next step is to adjust the solution over that interval (step 2 and
3). It remains to find the solution to the left of the found
interval. This is done in the subsequent iterations.

We now present the recursive algorithm $\solverp(\Gamma_I)$. There are
three cases to consider.

\textbf{First case:} $\nonurg(\Gamma_I) = \emptyset$. $\OptCost(l)$ is
a cost function (for every location $l$) and can be computed by
solving a finite game in polynomial time.  If $\afgoal$ has $p$ pieces
in total, then $\OptCost$ has at most $2p$ pieces \cite{BLMR06}. 

\textbf{Second case:} $\LMax \ni \lmin = \argmin\set{\pi(l) : l \in
  \nonurg(\Gamma)}$. 
In Case 2 of the algorithm, an iterative procedure is applied to
compute $\OptCost_\Gamma$ over the interval $I =[b,e]$; in each
iteration, the computation is restricted to the interval $[b,r]$, with
$r = e$ in the first iteration. First, in Step 1, a game $\Gamma'$
with one less non-urgent is constructed. We obtain $\Gamma'$ from
$\Gamma$ by making $\lmin$ an urgent location --- this captures the
intuition that, since $\lmin$ minimizes the weight function, it is
beneficial for the maximizer to leave $\lmin$ immediately. Second, in
Step 2, the procedure identifies the rightmost interval over which the
function $f = \OptCost_{\Gamma'}(\lmin)$, computed in Step 1, has an
affine piece with the slope strictly shallower than $-\pi(\lmin)$; the
affine piece and the interval are denoted by $f_i$ and $[b_i, e_i]$,
respectively. Third, in Step 2, a new game, $\Gamma''$, is
constructed; we are considering this game over the interval $[b_i,
  e_i]$. Like $\Gamma'$, the game $\Gamma''$ has one less non-urgent
location than the game $\Gamma$; it is obtained from $\Gamma$ by
turning $\lmin$ into a goal location with the cost function $h =
-\pi(\lmin)(r-x) + \OptCost_{\Gamma_{[r,e]}}(\lmin,r)$ assigned to
$\lmin$ --- this cost function captures the behaviour contrary to the
previously considered intuition, i.e., that, upon entering $\lmin$,
the maximizer spends all available time there. The game $\Gamma''$ is
used to adjust the solution, to account for states from which the
intuition that leaving $\lmin$ immediately is beneficial to the
maximizer is incorrect. The slope of $f_i$ is shallower than
$-\pi(\lmin)$, and since $\lmin$ minimises the weight function, this
means that $f_i$ is actually an affine piece of one of the cost
functions assigned to goal locations in the game~$\Gamma$. Finally, in
Step 3, $\OptCost_\Gamma$ over $[b_i,e]$ is being established. It is
equal to $\OptCost_{\Gamma'}$ over the interval $[e_i,r]$ and to
$\OptCost_{\Gamma''}$, over the interval $[b_i, e_i]$. The algorithm
then proceeds to the next iteration by setting $r = b_i$; the
iterative procedure is completed when $b_i = b$. The $\tcompose$
operation is used to assure consistency of solutions between
subsequent iterations.

The procedure is as follows:
\begin{enumerate}
  \item \label{alg:case2_step1} Assuming that we have computed
    $\OptCost_{\Gamma_{[r,e]}}$, for some $r \in I$, we set
    \[
      \Gamma_{[b,r]}' =
      \left(\tcompose\left(\Gamma_{\left[b,r\right]},\OptCost_{\Gamma_{\left[r,e\right]}}\right)\right)\left[\urg(\lmin):=1\right],
      \]
    and we compute $\OptCost_{\Gamma_{[b,r]}'} =
    \solverp\left(\Gamma_{[b,r]}'\right)$. Let $f =
    \piecewise{f_1,\ldots,f_k} = \OptCost_{\Gamma_{[b,r]}'}(\lmin)$.
  \item \label{alg:case2_step2} Let $i$ be the smallest natural number
    such that $a_i^f > -\pi(\lmin)$ and for all $j > i$ we have $a_j^f
    \leqslant -\pi(\lmin)$.  If $i > 0$, then we define $h : I_i^f \to
    \real$ as $-\pi(\lmin)(e_i^f - x) + f_i(e_i^f)$, and
    \[
     \Gamma_{\left[b_i^f,e_i^f\right]}'' = 
     \tcompose\left(\Gamma_{\left[b_i^f,e_i^f\right]}',\OptCost_{\Gamma_{\left[e_i^f,r\right]}'}\right)\left[\LGoal
       \cup \lmin,h\right],
      \]
    and compute $\OptCost_{\Gamma_{\left[b_i^f,e_i^f\right]}} =
    \solverp\left(\Gamma_{\left[b_i^f,e_i^f\right]}\right)$.
    \item \label{alg:case2_step3} 
      We set 
      \[
        \OptCost_{\Gamma_{\left[b_i^f,e\right]}} =
        \OptCost_{\Gamma_{\left[b_i^f,e_i^f\right]}''}\override
        \OptCost_{\Gamma_{\left[e_i^f,r\right]}'}\override\OptCost_{\Gamma_{\left[r,e\right]}}.
      \]
      If $i=0$ the $\Gamma''$ term is omitted.

      We set $r = b_i^f$. If $r \neq b$ then goto~\ref{alg:case2_step1},
      otherwise output $\OptCost_{\Gamma_I}$.
\end{enumerate}
We initialize the procedure by solving the game $\Gamma_{I}' =
\Gamma_{I}[\urg(\lmin):=1]$, and setting $r = e$. Observe that
$\OptCost_{\Gamma_{[e,e]}}= \OptCost_{\Gamma_{[e,e]}'}$.

\textbf{Third and last case:} $\LMin \ni \lmin = \argmin\set{\pi(l) :
  l \in \nonurg(\Gamma)}$. 
In Case 3 of the algorithm, an iterative procedure is applied to
compute $\OptCost_\Gamma$ over the interval $I =[b,e]$; in each
iteration, the computation is restricted to the interval $[b,r]$, with
$r = e$ in the first iteration. First, in Step 1, a game $\Gamma'$
with one less non-urgent is constructed. We obtain $\Gamma'$ from
$\Gamma$ by making $\lmin$ a goal location; the cost function, $h$,
assigned to $\lmin$ captures the following behavior, once $\lmin$ is
reached, the minimizer chooses to spend all available time
there. Second, in Step 2, the procedure identifies the rightmost
interval, over which the function $f$, computed in Step 1, is strictly
smaller than $h$ for at least one argument; the interval corresponds to
an affine segment of $f$, denoted by $f_i$. The argument, for which
the functions $f$ and $h$ are equal, is denoted by $x^* \in I_i$ ---
such an argument always exists as $f(r) = h(r)$, by definition. Third,
in Step 2, a new game, $\Gamma''$, is constructed. Like $\Gamma'$, the
game $\Gamma''$ is obtained from $\Gamma$ by turning $\lmin$ into a
goal location. In this case, however, the cost function assigned to
$\lmin$ is $f_i$, and the game is being considered over the interval
$[b_i, x^*]$ --- the cost function $f_i$ captures the intuition that
it is beneficial to leave $\lmin$ immediately. The game $\Gamma''$ is
used to adjust the solution, to account for states from which the
intuition that spending all available time in $\lmin$ is beneficial to
the minimizer is incorrect. The slope of $f_i$ is shallower than that
of $h$, which is equal to $-\pi(\lmin)$, and since $\lmin$ minimises
the weight function, this means that $f_i$ is actually an affine piece
of one of the cost functions assigned to goal locations in the
game~$\Gamma$. Finally, in Step 3, $\OptCost_\Gamma$ over $[b_i,e]$ is
being established. It is equal to $\OptCost_{\Gamma'}$ over the
interval $[x^*,r]$ and to $\OptCost_{\Gamma''}$, over the interval
$[b_i, x^*]$. The algorithm then proceeds to the next iteration by
setting $r = b_i$; the iterative procedure is completed when $b_i =
b$. The $\tcompose$ operation is used to assure consistency of
solutions between subsequent iterations.

The procedure is as follows:
\begin{enumerate}
  \item \label{alg:case3_step1} Assuming that we have computed
    $\OptCost_{\Gamma_{[r,e]}}$, for some $r \in I$, let $h : [b,r]
    \to\real$ be defined as $h(x) =
    -\pi(\lmin)(r-x)+\OptCost_{\Gamma_{[r,e]}}(\lmin,r)$, we set
    \[
      \Gamma_{[b,r]}' =
      \left(\tcompose\left(\Gamma_{[b,r]},\OptCost_{\Gamma_{[r,e]}}\right)\right)\left[\LGoal
        \cup \lmin,h\right]
      \]
    and compute $\OptCost_{\Gamma_{[b,r]}'} =
    \solverp\left(\Gamma_{[b,r]}'\right)$.  We define $f =
    \piecewise{f_1,\ldots,f_k}$ as $\min\set{
      \OptCost_{\Gamma_{[b,r]}'}(l)\sq (\lmin,l) \in E}$.
  \item \label{alg:case3_step2} Let $i$ be the smallest natural number
    such that for all $j > i$ we have $f(x) \geqslant h(x)$ over
    $[b_j^f, e_j^f]$. If $i > 0$ let $x^*$ denote the solution of $f(x) =
    h(x)$ (over $[b_i^f,e_i^f]$). We then set
    \[
      \Gamma_{\left[b_i^f,x^*\right]}'' =
      \tcompose(\Gamma_{\left[b_i^f,x^*\right]}',\OptCost_{\Gamma_{\left[x^*,r\right]}'})\left[\LGoal
        \cup \lmin,f_i\right]
      \]
    and compute $\OptCost_{\Gamma_{[b_i^f,x^*]}''} =
    \solverp\left(\Gamma_{[b_i^f,x^*]}''\right)$.
    \item \label{alg:case3_step3} We set 
      \[
        \OptCost_{\Gamma_{\left[b_i^f,e\right]}} =
        \OptCost_{\Gamma_{\left[b_i^f,x^*\right]}''}\override\OptCost_{\Gamma_{\left[x^*,r\right]}'}\override\OptCost_{\Gamma_{\left[r,e\right]}}.
        \]
      If $i=0$ then the $\Gamma''$ term is omitted.

    We set $r = b_i$. If $r \neq b$ then goto~\ref{alg:case3_step1},
    otherwise output $\OptCost_{\Gamma_I}$.
\end{enumerate}
We initialize the procedure by solving the game $\Gamma_{[e,e]}$, and
setting $r = e$.  Observe that this can be done in polynomial time.

The following example provides the intuition behind
the iterative procedure employed during each recursive call of the
algorithm.

\begin{example}
  Fig.~\ref{fig:ta_min_iteration_example} shows how the iterative
  procedure in Case 3 of the algorithm works to compute
  $\OptCost_\Gamma$ over the interval $I = [b,e]$. In diagram a) we
  can see that $\OptCost_\Gamma$ has been computed over the interval
  $[r,e]$. The function $h$ denotes the cost function assigned to
  $\lmin$ in $\Gamma'$ and the dashed line denotes the function $f$,
  as defined in Step~1 of Case~3. One can see that the interval
  $[b_i,e_i]$ and $x^*$, identified in Step 2 of Case 3 of the
  algorithm, are such that: over the interval $[x^*,r]$ the intuition,
  which indicates that the minimizer should spend all the available
  time in $\lmin$, is correct; and that over the interval $[b_i,x^*]$,
  this intuition is not correct, i.e., it is beneficial for the
  minimizer to leave $\lmin$ immediately --- the dashed and bold
  segment of $f_i$ denotes the cost function assigned to $\lmin$ in
  $\Gamma''$. In diagram b) we can see the next iteration of the
  algorithm. $\OptCost_\Gamma$ has been computed over $[r'=b_i,e]$;
  this iteration follows the same steps as the previous one. Note
  that, over the interval $[b_i,r]$, $\OptCost_\Gamma(l)$ is equal to
  $h$, over the interval $[x^*,r]$ and to $f_i$, over the interval
  $[b_i,x^*]$, as defined in Step 3 of Case 3 of the algorithm.\qed
\end{example}

\begin{figure}
  \begin{center}
    \begin{tabular}{c c}
      \includegraphics[scale=0.4]{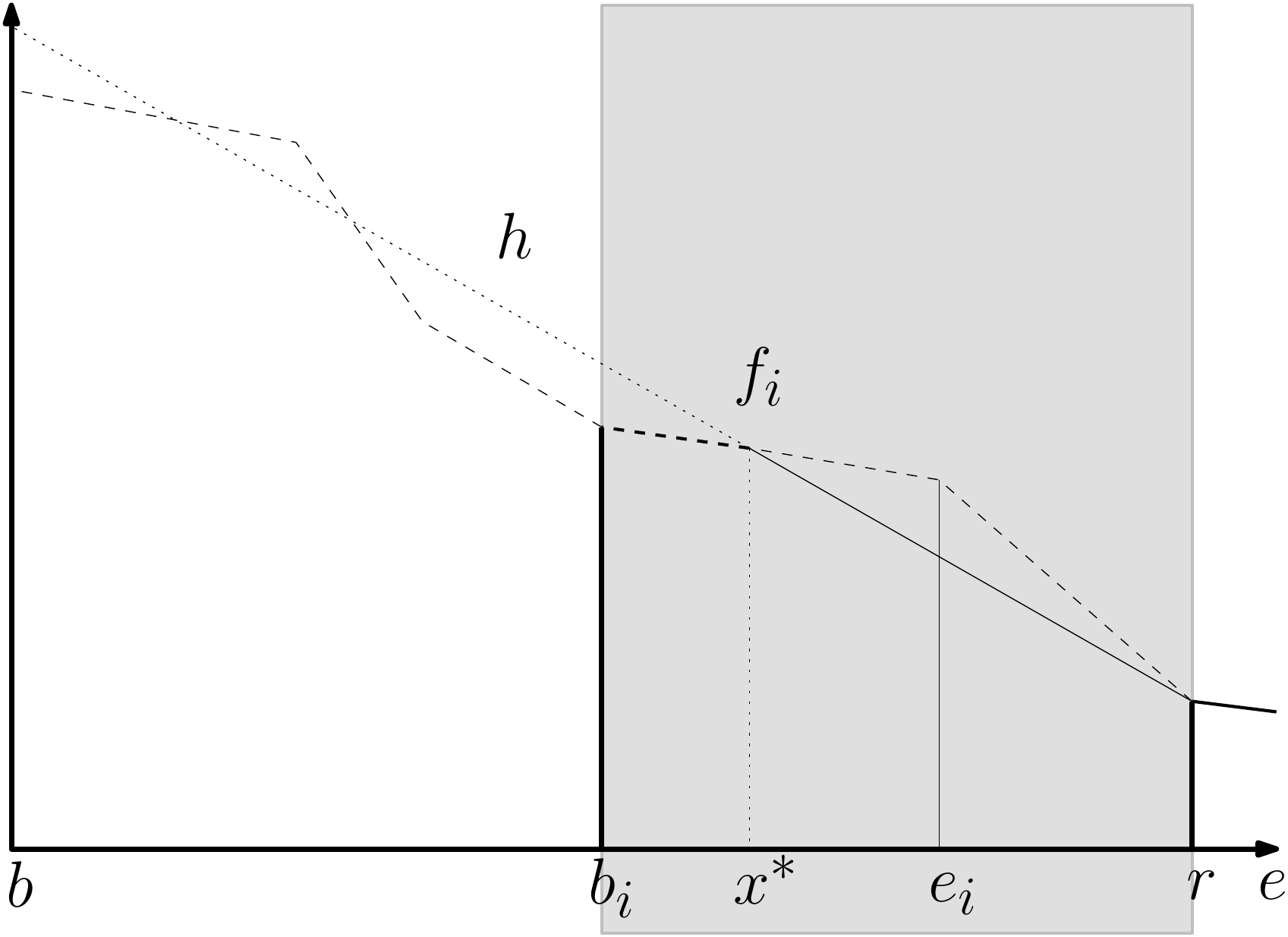} &
      \includegraphics[scale=0.4]{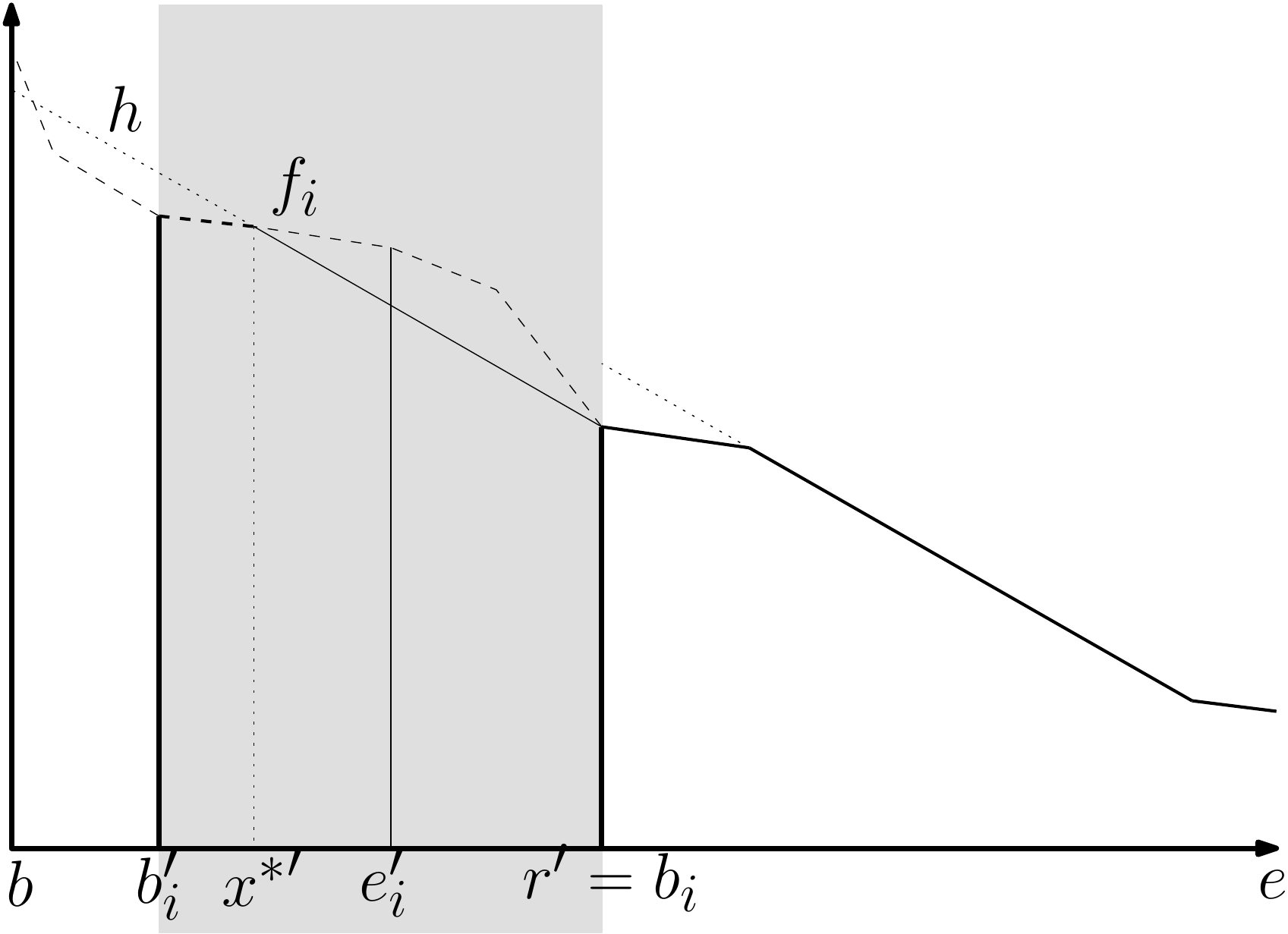} \\
      a) & b)
    \end{tabular}
    \caption{Iterative computation of $\OptCost_\Gamma(\lmin)$ over
      the interval $I = [b,e]$, where $\lmin \in \LMin$. Diagrams a)
      and b) depict two subsequent iterations; the gray rectangle
      indicates the subinterval for which the $\OptCost_{\Gamma}(\lmin)$ function is
      being computed during the given iteration.}
      \label{fig:ta_min_iteration_example}
  \end{center}
\end{figure}

\paragraph{Correctness and complexity.}
We show that the procedure $\solverp$ is correct, i.e., that if it
terminates, the output is in fact the $\OptCost$ function, and that it
indeed terminates. We will also show, that there is an exponential
upper bound on the running time of $\solverp$.
The main result of this paper is as follows:

\begin{theorem}
  Given a reachability-price game $\Gamma$, the function
  $\solverp(\Gamma)$ terminates and outputs the function
  $\OptCost_\Gamma$.
\end{theorem}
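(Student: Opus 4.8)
The plan is to prove both claims, termination and correctness, simultaneously by induction on $n = |\nonurg(\Gamma_I)|$, the number of non-urgent locations, strengthening the statement with the invariant that $\OptCost_{\Gamma_I}$ is, locationwise, a cost function, so that Lemmas~\ref{lemma:minC_piecewise_affine} and~\ref{lem:OptCost_opt_eq} and Prop.~\ref{prop:minC_segment_preservation} remain applicable to every intermediate object the algorithm manipulates. The base case $n = 0$ is exactly the First case of $\solverp$: with no non-urgent non-goal location present, the game reduces to a finite game, which is solved in polynomial time, and $\OptCost$ comes out as a cost function with a bounded number of pieces. For the inductive step I would fix $\lmin = \argmin\set{\pi(l) : l \in \nonurg(\Gamma)}$ and observe that each game handed to a recursive call, namely $\Gamma'$ and $\Gamma''$ of the Second or the Third case, is obtained from $\Gamma$ by making $\lmin$ urgent or by turning it into a goal location, while the fresh locations introduced by $\tcompose$ are goal locations; either way $\lmin$ leaves $\nonurg$ and nothing re-enters it, so every recursive call runs on a game with $n-1$ non-urgent locations and the inductive hypothesis applies.

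Termination has two layers. The recursion bottoms out because the number of non-urgent locations strictly decreases along every recursive call, so it remains only to bound the inner loop over $r$. Here I would show that $r$ strictly decreases at each pass ($b_i^f < r$, since the piece $f_i$ identified in Step~2 is a genuine, non-degenerate piece of $f$ lying strictly to the left of $e_1^f = r$), and that only finitely many passes occur. The handle is the remark already made in the text: because $\lmin$ minimises $\pi$, a piece of $f$ shallower than $-\pi(\lmin)$ cannot be one of the waiting pieces introduced by a time-progression ($\minC$ or $\maxC$) step, each of which has slope $-\pi(l) \leqslant -\pi(\lmin)$; tracing such a piece back through the intervening $\min$, $\max$ and override operations, Prop.~\ref{prop:minC_segment_preservation} forces it to coincide with an affine piece of one of the goal cost functions of $\Gamma$. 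Each pass therefore consumes at least one of the finitely many goal-function pieces, and once no shallow piece remains (the $i = 0$ branch, which drives $r$ down to $b$) the loop halts.

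For correctness I would establish, pass by pass, that the function assembled in Step~3 by the two $\override$ operations equals $\OptCost_{\Gamma_{[b_i^f,e]}}$, and that chaining the passes reconstitutes $\OptCost_{\Gamma_I}$. The cross-interval gluing is exactly what Lemma~\ref{lem:timed_composition} delivers: each $\tcompose$ re-encodes the already-fixed right-hand solution through fresh goal locations, so that solving the resulting game over the left interval and then overriding with the right-hand solution reproduces $\OptCost$ over the union; this is what ensures the $\override$ in Step~3 leaves the values previously committed on $[r,e]$ untouched. Lemma~\ref{lem:one_location_computed} in turn licenses treating $\lmin$ as a goal location once its value has been pinned down, without disturbing the other locations.

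I expect the interval split to be the main obstacle. The argument has to connect the values of the auxiliary games back to $\OptCost_\Gamma$ through Lemma~\ref{lem:OptCost_opt_eq}, which expresses $\OptCost(\lmin)$ as $\maxC(h,\pi(\lmin))$ (resp.\ $\minC(h,\pi(\lmin))$ for the minimizer). Making $\lmin$ urgent in $\Gamma'$ suppresses precisely the pieces of this $\maxC$ that are flatter than $-\pi(\lmin)$, i.e.\ the regions in which waiting is strictly profitable; consequently over the subinterval where $f$ is at least as steep as $-\pi(\lmin)$ (Case~2), respectively where $f \geqslant h$ (Case~3), urgency changes nothing and $\OptCost_{\Gamma'} = \OptCost_\Gamma$, while over the complementary piece $f_i$ the maximizer strictly prefers to wait and $\Gamma''$, whose goal cost function on $\lmin$ encodes ``wait to the right endpoint, then continue optimally'', supplies exactly that deviation. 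The delicate points are that these two behaviours partition the interval at the single boundary $e_i^f$ (resp.\ at the crossing point $x^*$ where $f(x^*) = h(x^*)$), with no intermediate regime and no value lost at the seam, and that the dual statement holds for the minimizer; pinning this down is where Prop.~\ref{prop:minC_segment_preservation} together with the monotonicity and continuity of cost functions is indispensable.
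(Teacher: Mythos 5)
Your skeleton matches the paper's: induction on the number of non-urgent locations, Lemma~\ref{lem:timed_composition} for gluing across iterations, Lemma~\ref{lem:one_location_computed} for freezing $\lmin$ as a goal location, and a finiteness argument over goal-function pieces for the inner loop. But at the one place where the real work happens you stop at the intuition instead of proving it. The claim that ``over the subinterval where $f \geqslant h$ urgency/waiting changes nothing, so $\OptCost_{\Gamma'} = \OptCost_\Gamma$, while over $f_i$ the other behaviour is optimal'' is exactly the statement of Lemmas~\ref{lem:case_2_partially_correct} and~\ref{lem:case_3_partially_correct}, and it cannot be obtained from Prop.~\ref{prop:minC_segment_preservation} plus monotonicity and continuity of cost functions: those facts speak about functions in isolation, whereas the statement to be proved compares game values, i.e.\ $\inf$--$\sup$ over strategies in two different games. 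The direction $\OptCost_{\Gamma'/''}(\lmin) \geqslant \OptCost_\Gamma(\lmin)$ is immediate (the auxiliary games restrict the minimizer), but the reverse inequality is where the paper needs an actual game-theoretic argument: fix $\varepsilon > 0$ and a minimizer strategy $\mu$, take a maximizer strategy $\chi_\varepsilon$ that is $\varepsilon$-optimal for $\mu$ in $\Gamma'$ (resp.\ $\Gamma''$), and decompose the induced run in $\Gamma$ at its \emph{last} visit to $\lmin$; the fact that $\lmin$ minimizes the price rate lets one lower-bound the price accumulated between the first and last visits by the price of simply waiting in $\lmin$, yielding $\Cost_\Gamma(\omega) \geqslant \OptCost_{\Gamma'/''}(s) - \varepsilon$. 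The multi-visit issue --- a run of $\Gamma$ may re-enter $\lmin$ arbitrarily often, which neither $\Gamma'$ nor $\Gamma''$ can simulate directly --- is precisely what this run-decomposition handles, and nothing in your sketch addresses it.

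There is also a smaller gap in your termination argument. You assert that ``each pass consumes at least one of the finitely many goal-function pieces,'' which presupposes that a piece, once processed, can never be encountered again. That is not automatic: the same affine piece of a goal cost function can in principle reappear as $f_i$ over a different, disjoint subinterval in a later pass. The paper closes this by bounding the number of iterations not by the number of pieces but by the number of \emph{intersection points} of goal cost functions: it shows that between two encounters of the same piece, $\OptCost_{\Gamma'}(\lmin)$ must contain a piece shallower than that goal piece, which itself coincides with some goal cost function, so an intersection point has been passed in between. Without this sub-argument (or some substitute for it), your loop bound does not follow.
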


We will prove the Theorem in two steps. First, we prove that if the
iterative procedure in cases 2 and 3 terminates, it computes
$\OptCost$. Second, we show that it always terminates.

\begin{theorem}
  \label{theo:partially_correct}
  Given a reachability-price game $\Gamma$, if $\solverp(\Gamma)$
  terminates, it outputs the function $\OptCost_\Gamma$.
\end{theorem}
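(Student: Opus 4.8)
The plan is to prove the statement by strong induction on the number of non-urgent locations $|\nonurg(\Gamma)|$, following the recursion of $\solverp$. If $\solverp(\Gamma)$ terminates then so does every recursive call, and since each call is issued on a game with one fewer non-urgent location (turning $\lmin$ urgent in Case~2, or into a goal location in Cases~2 and~3, removes it from $\nonurg$), the induction hypothesis applies to all of them. The base case is $\nonurg(\Gamma)=\emptyset$ (Case~1): time cannot progress in any non-goal location, so for each fixed clock value the game degenerates to a finite reachability-price game on the location graph with payoffs $\afgoal(l)(x)$; its min--max value is $\OptCost_\Gamma$ and is locationwise a cost function, computed in polynomial time as in~\cite{BLMR06}.

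For the inductive step I would carry out Case~2 ($\lmin\in\LMax$) in full and obtain Case~3 by the dual argument, replacing $\max$ and $\maxC$ by $\min$ and $\minC$ and the slope test of Step~2 by the value test $f<h$. The core is an inner induction on the iterations of the loop, i.e.\ on the decreasing endpoints $e=r_0>r_1>\dots>r_m=b$, with loop invariant that the stored function $\OptCost_{\Gamma_{[r,e]}}$ coincides with the true $\OptCost_\Gamma$ on $\Loc\times[r,e]$; when $r=b$ this invariant yields exactly the claimed output $\OptCost_{\Gamma_I}$. The initialization $r=e$ is immediate. For the maintenance step, I first use Lemma~\ref{lem:timed_composition} to argue that the $\tcompose$ operation glues the not-yet-known solution on $[b,r]$ to the stored correct solution on $[r,e]$; because the stored $\OptCost_\Gamma(\lmin)$ has slope $\leqslant-\pi(\lmin)$ throughout $[r,e]$ (being assembled in earlier iterations out of $\maxC$-fixed pieces and waiting lines), making $\lmin$ urgent does not change it there, so $\Gamma'_{[b,r]}$ is the restriction to $[b,r]$ of $\Gamma$ with $\lmin$ made urgent, carrying the correct future, and the recursive call returns $\OptCost_{\Gamma'_{[b,r]}}$ correctly by the induction hypothesis.

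It then remains to establish the two region-correctness claims and glue them. Turning $\lmin$ urgent only restricts the maximizer, so $\OptCost_{\Gamma'}\leqslant\OptCost_\Gamma$ pointwise, while Lemma~\ref{lem:OptCost_opt_eq} gives $\OptCost_\Gamma(\lmin)=\maxC(h',\pi(\lmin))$ with $h'(x)=\max_{(\lmin,l')\in E}\OptCost_\Gamma(l',x)$. Over the region $[e_i^f,r]$ chosen to the right in Step~2, $f=\OptCost_{\Gamma'}(\lmin)$ has every slope $\leqslant-\pi(\lmin)$ and is therefore a fixed point of $\maxC(\cdot,\pi(\lmin))$ (by the $\maxC$-analogue of Prop.~\ref{prop:minC_segment_preservation}); combined with the inequality this forces $\OptCost_{\Gamma'}=\OptCost_\Gamma$ at $\lmin$ there, and Lemma~\ref{lem:one_location_computed} propagates the equality to all locations on $[e_i^f,r]$. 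Over the selected interval $[b_i^f,e_i^f]$, where $f$ is strictly shallower than $-\pi(\lmin)$, evaluating $\maxC(h',\pi(\lmin))$ shows the maximizer waits exactly up to $e_i^f$, so $\OptCost_\Gamma(\lmin)$ equals the affine function $h(x)=-\pi(\lmin)(e_i^f-x)+f_i(e_i^f)$ used to build $\Gamma''$; hence $\OptCost_{\Gamma''}(\lmin)=h=\OptCost_\Gamma(\lmin)$ on $[b_i^f,e_i^f]$ and, again by Lemma~\ref{lem:one_location_computed}, on all locations. The Step~3 override of the $\Gamma''$-, $\Gamma'$- and stored solutions is then $\OptCost_\Gamma$ on $[b_i^f,e]$; the adjacent pieces share common values at the seams $e_i^f$ and $r$, so the result is a genuine cost function, re-establishing the invariant with $r:=b_i^f$.

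The step I expect to be the main obstacle is exactly this pair of region-correctness claims, and in particular justifying that solving the altered games $\Gamma'$ and $\Gamma''$—whose dynamics at $\lmin$ differ from $\Gamma$—recovers the true $\OptCost_\Gamma$ rather than the value of an unrelated game. The delicacy is that $\lmin$ may lie on cycles, so one cannot simply read $\OptCost_{\Gamma'}(\lmin)$ off as $h'$; the argument must marry the one-sided monotonicity with the fixed-point and slope behaviour of $\maxC$ supplied by Lemma~\ref{lem:OptCost_opt_eq} and Prop.~\ref{prop:minC_segment_preservation}, and then lean on Lemma~\ref{lem:one_location_computed} to transport equality at $\lmin$ to the entire state space. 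Verifying that the Step~2 selection demarcates the fixed-point region from the wait region precisely, and that the same bookkeeping survives the passage to the value test and $\minC$ of Case~3, is where the real work lies.
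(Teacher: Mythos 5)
Your overall skeleton --- the outer induction on $|\nonurg(\Gamma)|$, an inner loop invariant over the shrinking intervals $[r,e]$, and the use of Lemmas \ref{lem:timed_composition}, \ref{lem:one_location_computed} and \ref{lem:OptCost_opt_eq} to glue partial solutions --- is the same as the paper's. The gap is precisely in the step you yourself flag as the main obstacle: the region-correctness claims. You obtain one inequality from monotonicity ($\OptCost_{\Gamma'}\leqslant\OptCost_\Gamma$ in your Case 2, the reverse in Case 3), and then assert that, since $\OptCost_{\Gamma'}(\lmin)$ is a fixed point of $\maxC(\cdot,\pi(\lmin))$ over the selected region, this ``forces'' equality with $\OptCost_\Gamma(\lmin)$. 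That inference is invalid as stated: Lemma \ref{lem:OptCost_opt_eq} only says that $\OptCost_\Gamma$ \emph{satisfies} the local Bellman-type equations, not that it is their unique (or least, or greatest) solution. In the presence of cycles through $\lmin$ these equations admit many solutions; for instance, for a minimizer location $l$ with a price-zero self-loop and an edge to a goal location of cost $5$, the equation $V(l)=\min(V(l),5)$ is satisfied by every $V(l)\leqslant 5$, so a solution lying on the ``correct'' side of the value need not equal the value. To repair your route you would need an extremal fixed-point characterization of $\OptCost_\Gamma$ (and the relevant extremum differs between minimizer and maximizer components), which neither your proposal nor any lemma in the paper provides. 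The paper closes exactly this hole with a different, strategy-based argument: for an arbitrary minimizer strategy $\mu$ it takes a maximizer strategy $\chi_\varepsilon$ that is $\varepsilon$-optimal for $\mu$ in the modified game, decomposes the run $\Run(s,\mu,\chi_\varepsilon)$ at its visits to $\lmin$, and uses the fact that $\lmin$ minimizes the price rate to telescope the inequalities down to $\Cost_\Gamma(\omega)\geqslant\OptCost_{\Gamma'}(s)-\varepsilon$ (and analogously for $\Gamma''$); no uniqueness of fixed points is ever invoked.

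A second, lesser problem is your plan to prove Case 2 in full and obtain Case 3 ``by the dual argument.'' The two cases are not dual: in Case 2 the game is modified by making $\lmin$ urgent (restricting the maximizer's ability to wait), whereas in Case 3 it is modified by converting $\lmin$ into a goal location with a waiting-line cost function (removing the minimizer's choices there entirely), and the reachability objective itself is asymmetric between the players. The paper does the opposite of what you propose --- it proves Case 3, the genuinely new case, in detail via Lemma \ref{lem:case_3_partially_correct}, and disposes of Case 2 by citing Bouyer et al. --- precisely because the details do not transfer by symmetry.
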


\begin{proof}
 The proof is inductive. Fix $\Gamma$, and let $l$ be the non-urgent
 location that minimizes $\pi(l)$. Assume that $\Gamma$ has $n+1$
 non-urgent locations, and that Theorem \ref{theo:partially_correct}
 holds for every game $\Gamma'$ that has at most $n$ non-urgent
 locations.

 If there are no non-urgent locations, then computing
 $\OptCost_{\Gamma}$ amounts to solving a reachability-price game on a
 finite graph.  It remains to prove the inductive step; there are two
 cases to consider. The first case, when $l \in \LMin$ and the second
 case when $l \in \LMax$. The proofs of these two cases follows from
 Lemmas \ref{lem:case_2_partially_correct} and
 \ref{lem:case_3_partially_correct}.\qed
\end{proof}

\begin{lemma}
  \label{lem:case_2_partially_correct}
  Given a reachability-price game $\Gamma$ with the price-rate
  minimizing location in $\LMax$, if Case 2 of $\solverp$ terminates,
  it outputs $\OptCost_\Gamma$.
\end{lemma}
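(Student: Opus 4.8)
The plan is to proceed by a secondary induction on the iterations of the loop in Case~2, reading the time axis from right to left, while the outer induction of Theorem~\ref{theo:partially_correct} (on the number of non-urgent locations) supplies the correctness of every recursive $\solverp$ call made inside the loop. Both $\Gamma'$ (Step~\ref{alg:case2_step1}) and $\Gamma''$ (Step~\ref{alg:case2_step2}) are obtained from $\Gamma$ by making $\lmin$ urgent, respectively a goal location, so each has one fewer non-urgent location; assuming Case~2 terminates, the outer hypothesis gives $\solverp(\Gamma'_{[b,r]})=\OptCost_{\Gamma'_{[b,r]}}$ and $\solverp(\Gamma''_{[b_i^f,e_i^f]})=\OptCost_{\Gamma''_{[b_i^f,e_i^f]}}$. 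The loop invariant I would carry is that at the start of each iteration the already-computed $\OptCost_{\Gamma_{[r,e]}}$ equals the restriction of the true $\OptCost_\Gamma$ to $[r,e]$; the base case $r=e$ is immediate since $\OptCost_{\Gamma_{[e,e]}}=\OptCost_{\Gamma'_{[e,e]}}$. The role of $\tcompose$ is, via Lemma~\ref{lem:timed_composition}, to identify $\tcompose(\Gamma_{[b,r]},\OptCost_{\Gamma_{[r,e]}})$ with the genuine game $\Gamma$ restricted to $[b,r]$ (the boundary at $r$ being encoded in the added goal cost functions), so that $\Gamma'_{[b,r]}$ is exactly this game with $\lmin$ turned urgent and every quantity computed inside the iteration refers to $\OptCost_\Gamma$.

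The heart of the argument is to split $[b_i^f,e]$ into the three pieces appearing in the override of Step~\ref{alg:case2_step3} and to match each with $\OptCost_\Gamma$. On $[r,e]$ this is the loop invariant. On $[e_i^f,r]$ I would show $\OptCost_\Gamma=\OptCost_{\Gamma'}$ for every location. One inequality is free: making $\lmin$ urgent only removes options from the maximizer, so $\OptCost_{\Gamma'}\leqslant\OptCost_\Gamma$. For the converse I use that, by the choice of $i$ in Step~\ref{alg:case2_step2}, the function $f=\OptCost_{\Gamma'}(\lmin)$ has slope $\leqslant-\pi(\lmin)$ throughout $[e_i^f,r]$; by Lemma~\ref{lem:OptCost_opt_eq} (the $\maxC$ version) together with the $\maxC$ analogue of Proposition~\ref{prop:minC_segment_preservation}, a function of slope everywhere $\leqslant-\pi(\lmin)$ is fixed by $\maxC(\cdot,\pi(\lmin))$, i.e. waiting in $\lmin$ is never strictly profitable for the maximizer on this region. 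Hence $\OptCost_{\Gamma'}$ already satisfies the optimality equations of Lemma~\ref{lem:OptCost_opt_eq} for $\Gamma$ on $[e_i^f,r]$, which forces equality with $\OptCost_\Gamma$ there (equivalently: any near-optimal maximizer strategy for $\Gamma$ may be assumed never to wait in $\lmin$ on this region, and is therefore a legal strategy of $\Gamma'$ with the same cost).

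On the remaining interval $[b_i^f,e_i^f]$ I would first pin down the value at $\lmin$ itself, showing $\OptCost_\Gamma(\lmin,x)=h(x)$, where $h$ is the slope-$(-\pi(\lmin))$ cost function of Step~\ref{alg:case2_step2} that agrees with $f$ at $e_i^f$. Here the slope of $f_i$ is strictly shallower than $-\pi(\lmin)$, so the $\maxC$ computation of Lemma~\ref{lem:OptCost_opt_eq} is maximised by waiting as long as possible: the maximizer waits until the clock reaches $e_i^f$ and then acts optimally, which, using the value $\OptCost_\Gamma(\lmin,e_i^f)=f(e_i^f)$ established in the previous paragraph, yields exactly $h$. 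Knowing $h=\OptCost_\Gamma(\lmin)$ on $[b_i^f,e_i^f]$, I invoke Lemma~\ref{lem:one_location_computed}: replacing $\lmin$ by a goal location carrying $h$ leaves $\OptCost$ unchanged at every other location, so $\OptCost_{\Gamma''}(l',x)=\OptCost_\Gamma(l',x)$ for all $l'$ and all $x\in[b_i^f,e_i^f]$ — where the inner $\tcompose$ in the definition of $\Gamma''$, together with Lemma~\ref{lem:timed_composition} and the correctness of $\OptCost_{\Gamma'_{[e_i^f,r]}}$ just proved, supplies the correct boundary at $e_i^f$. Assembling the three pieces with $\override$ reproduces $\OptCost_\Gamma$ on $[b_i^f,e]$, re-establishes the invariant with $r:=b_i^f$, and termination at $b_i^f=b$ yields $\OptCost_{\Gamma_I}$.

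The main obstacle I anticipate is exactly the two value-equalities above, because the slope data driving Step~\ref{alg:case2_step2} is read off the modified, restricted game $\Gamma'$ whereas the conclusions must be about the true game $\Gamma$; making this rigorous needs both the optimality characterisation of Lemma~\ref{lem:OptCost_opt_eq} and a transfer from the single location $\lmin$ to all locations (the point of Lemma~\ref{lem:one_location_computed}). One must also break the apparent circularity between ``$\lmin$ agrees'' and ``the successors of $\lmin$ agree'' by settling the steep region $[e_i^f,r]$ first, then the boundary value at $e_i^f$, and only then the shallow region $[b_i^f,e_i^f]$.
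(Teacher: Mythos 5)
Your proposal is sound in outline, but it bears no resemblance to the paper's own proof for the simple reason that the paper does not really prove this lemma: its entire proof is the observation that Case 2 of $\solverp$ coincides with the corresponding case of Bouyer et al.'s algorithm, and it defers wholesale to \cite{BLMR06}. What you have written is instead the Case-2 analogue of the paper's own proof of Lemma~\ref{lem:case_3_partially_correct}, with the players' roles swapped: the ``free'' inequality now comes from urgency restricting the maximizer (giving $\OptCost_{\Gamma'}\leqslant\OptCost_\Gamma$), and the converse inequality on $[e_i^f,r]$, together with the identification $\OptCost_\Gamma(\lmin)=h$ on $[b_i^f,e_i^f]$ followed by an appeal to Lemma~\ref{lem:one_location_computed}, play the role of the two $\varepsilon$-optimal-strategy computations in the Case 3 proof. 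This is a legitimate and arguably preferable route, since it makes the result self-contained and exposes the symmetry between the two cases, which the paper only gestures at. One caveat: where you argue that a function satisfying the optimality equations of Lemma~\ref{lem:OptCost_opt_eq} ``is forced to equal'' $\OptCost_\Gamma$, nothing in the paper gives uniqueness of solutions to those equations, so that step is not justified as stated; your parenthetical alternative --- fixing $\varepsilon>0$, taking a strategy $\varepsilon$-optimal in the modified game and showing via the slope bound $\leqslant-\pi(\lmin)$ that waiting in $\lmin$ gains the maximizer nothing --- is the argument you should actually run, and it has exactly the shape of the chains of inequalities in the paper's Case 3 proof, with the $\maxC$ variants of Lemma~\ref{lem:OptCost_opt_eq} and Proposition~\ref{prop:minC_segment_preservation} supplying the slope facts. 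The same remark applies to pinning down $\OptCost_\Gamma(\lmin)=h$ on $[b_i^f,e_i^f]$: the slope data driving that step is read off $\Gamma'$, so, as you yourself anticipate, the transfer to $\Gamma$ must go through strategies rather than through the fixed-point equation alone.
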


\begin{proof}
  Case 2 is handled in the same way as in Bouyer's et al. algorithm
  \cite{BLMR06}.\qed
\end{proof}

\begin{lemma}
  \label{lem:case_3_partially_correct}
  Given a reachability-price game $\Gamma$ with the price-rate
  minimizing location in $\LMin$, if Case 3 of $\solverp$ terminates,
  it outputs $\OptCost_\Gamma$.
\end{lemma}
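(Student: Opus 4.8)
The plan is to prove that Case 3 of $\solverp$ correctly computes $\OptCost_\Gamma$ over the interval $I = [b,e]$, assuming termination, by induction combined with a right-to-left analysis of the iterative procedure. The structure mirrors Case 2 (Lemma~\ref{lem:case_2_partially_correct}), but the handling of a minimizer location requires separate justification of why the ``wait forever'' intuition, and its correction, together recover the true optimal cost. Throughout I will use the induction hypothesis of Theorem~\ref{theo:partially_correct}: every recursive call is made on a game with strictly fewer non-urgent locations (both $\Gamma'$ and $\Gamma''$ turn $\lmin$ into a goal location), so by induction $\solverp$ returns the correct $\OptCost$ for each subproblem.

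First I would establish the single-iteration claim: fixing $r \in I$ and assuming $\OptCost_{\Gamma_{[r,e]}}$ is already correct, Step~3 correctly computes $\OptCost_{\Gamma_{[b_i^f, e]}}$. The argument splits at $\lmin$. The function $h$ encodes the strategy ``upon entering $\lmin$, wait until time $r$ and then play optimally as in $\Gamma_{[r,e]}$''; by Lemma~\ref{lem:timed_composition} and the $\tcompose$ construction, $\OptCost_{\Gamma_{[b,r]}'}$ is the correct optimal cost under the constraint that at $\lmin$ the minimizer commits to this waiting behaviour. The key observation, justified via Lemma~\ref{lem:OptCost_opt_eq}, is that the true $\OptCost_\Gamma(\lmin)$ equals $\minC$ of the minimum of the successors' cost functions, and since $\lmin$ minimizes $\pi$, any piece strictly below the ``wait'' value $h$ must come from a successor whose cost function is already a piece $f_i$ of some goal cost function assigned in $\Gamma$ (with slope shallower than $-\pi(\lmin)$). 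This is what makes $f_i$ in Step~2 the correct replacement cost function: over $[b_i^f, x^*]$ the minimizer does better by leaving $\lmin$ immediately, and this is captured by assigning $f_i$ to $\lmin$ in $\Gamma''$; at $x^*$, where $f$ and $h$ agree, the two regimes coincide, so the override in Step~3 is continuous and correct.

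Next I would verify the right-to-left induction over iterations. The crossover point $x^*$ is well-defined because $f(r) = h(r)$ by construction, guaranteeing the two functions meet on $I_i^f$; and over $[x^*, r]$ we have $f \geqslant h$, so the ``wait'' strategy is genuinely optimal there, meaning $\OptCost_{\Gamma_{[x^*,r]}'}$ already equals $\OptCost_{\Gamma_{[x^*,e]}}$ restricted appropriately. Setting $r := b_i^f$ and repeating, each iteration extends the region of correctly computed $\OptCost$ leftward; since the procedure terminates with $b_i = b$, the overrides compose to yield $\OptCost_{\Gamma_I}$ over all of $I$. The role of $\tcompose$ between iterations is exactly to re-encode the already-solved right portion as goal cost functions so that the next recursive call sees a game consistent with the full interval, and Lemma~\ref{lem:timed_composition} guarantees the stitched solution agrees with the true $\OptCost$.

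The main obstacle, and the step I would spend the most care on, is the \emph{soundness of the crossover argument}: proving rigorously that below the ``wait'' value the optimal play really is ``leave immediately along the piece $f_i$,'' with nothing in between being optimal. Concretely, I must show that on $[b_i^f, x^*]$ the minimizer cannot do better by waiting a \emph{positive but bounded} amount of time in $\lmin$ before leaving --- i.e., that the optimum is achieved at one of the two extreme behaviours (immediate departure vs. waiting to $r$), never strictly inside. This convexity/extremality fact relies on Proposition~\ref{prop:minC_segment_preservation} (the slope structure preserved by $\minC$) together with the fact that $\lmin$ has the minimal price rate, which forces the waiting cost slope $-\pi(\lmin)$ to be the shallowest available and hence makes intermediate waiting dominated. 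Pinning down this extremality, and confirming that the identified piece $f_i$ indeed originates from a goal cost function of $\Gamma$ (so that $\Gamma''$ is a legitimate smaller instance), is where the argument genuinely departs from the maximizer case and must be carried out explicitly.
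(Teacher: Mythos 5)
Your overall scaffolding matches the paper's: induction on the number of non-urgent locations (both $\Gamma'$ and $\Gamma''$ turn $\lmin$ into a goal location), the observation that the constrained games restrict the minimizer and hence can only overestimate the true value, and the identification that the hard step is showing that one of the two extreme behaviours at $\lmin$ (wait until $r$, or leave immediately along $f_i$) is always optimal. But the tool you propose for that hard step cannot deliver it, and the mechanism the paper actually uses is absent from your proposal. You want to derive the extremality from Lemma~\ref{lem:OptCost_opt_eq} together with Proposition~\ref{prop:minC_segment_preservation}. Lemma~\ref{lem:OptCost_opt_eq}, however, expresses $\OptCost_\Gamma(\lmin)$ as $\minC$ applied to the minimum of the successors' \emph{true} values $\OptCost_\Gamma(l')$ --- and those are precisely the unknowns: the game graph may have cycles through $\lmin$, so $\OptCost_\Gamma(l')$ itself depends on $\OptCost_\Gamma(\lmin)$. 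All the algorithm gives you is the one-sided bound $\OptCost_\Gamma(l') \leqslant \OptCost_{\Gamma'}(l')$, which points the wrong way: a deviation of the form ``wait a little in $\lmin$, then move to $l'$'' could in principle be cheap in $\Gamma$ exactly because from $l'$ the minimizer can return to $\lmin$ and exploit the behaviour forbidden in $\Gamma'$. A one-step fixed-point equation such as Lemma~\ref{lem:OptCost_opt_eq} does not have a unique solution in general and cannot by itself exclude this, so your ``convexity/extremality'' step is circular as stated.

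The paper closes this gap with a game-theoretic argument that your proposal never invokes: fix an arbitrary $\mu \in \SigmaMin$ in $\Gamma$, take $\chi_\varepsilon \in \SigmaMax$ that is $\varepsilon$-optimal for $\mu$ in the constrained game ($\Gamma'$, resp.\ $\Gamma''$), and decompose the induced run $\omega$ at its visits $i_1,\ldots,i_m$ to $\lmin$. Since the suffix after the last visit never re-enters $\lmin$ (except at price zero on its first transition), its cost is bounded below by the constrained game's value up to $\varepsilon$; and since $\lmin$ minimizes the price rate, the entire middle portion of the run, between the first and last visit, costs at least $\pi(\lmin)$ per unit of elapsed time, so all of its waiting can be collapsed into a single stay at $\lmin$. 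This yields $\Cost_\Gamma(\omega) \geqslant \OptCost_{\Gamma'}(s) - \varepsilon$ (resp.\ $\OptCost_{\Gamma''}(s_{i_1}) - \varepsilon$), and taking the appropriate sup/inf and letting $\varepsilon \to 0$ gives the reverse inequality. It is this run decomposition --- handling minimizer strategies that revisit $\lmin$ arbitrarily many times and spread their waiting over several visits --- that is the genuinely new content of the Case~3 proof and that your proposal lacks; the slope-structure facts about $\minC$ that you lean on are used in the paper only for termination (Theorem~\ref{lem:iteration_termination}), not for the correctness of Case~3.
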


\begin{proof}
  Without loss of generality we assume that we are dealing with a
  single interval $I$. Let $\lmin$, $\Gamma'$, $\Gamma''$, $f$, and
  $x^* \in I$ be defined as in the Case 3 of the algorithm.

  We will show that $\OptCost_{\Gamma'}(\lmin) =
  \OptCost_{\Gamma}(\lmin)$ over the interval $[x^*,e]$ and that
  $\OptCost_{\Gamma''}(\lmin) = \OptCost_{\Gamma}(\lmin)$ over the
  interval $[b,x^*]$. This, together with Lemmas
  \ref{lem:timed_composition} and \ref{lem:one_location_computed}
  enables us to establish that the procedure for computing
  $\OptCost_\Gamma$ over $I$ in Case 3 is correct. By the inductive
  hypothesis, we can solve $\Gamma'$ and $\Gamma''$ --- these games
  have one less non-urgent location than $\Gamma$. Additionally, in
  games $\Gamma'$ and $\Gamma''$ we restrict the minimizer so we
  immediately have:
  \[
    \OptCost_{\Gamma^{'/''}}(l)(x) \geqslant \OptCost_{\Gamma}(l)(x),
    \]
  for every $l \in L$ and every $x \in [x^*,e]/[b,x^*]$. By definition
  of $\Gamma'$, there is an equality for $x = e$. To complete the
  proof, we need to show the reverse inequality.

  There are two cases to consider. We start with the first case, i.e.,
  we will show that $\OptCost_{\Gamma'}(\lmin) \leqslant
  \OptCost_{\Gamma}(\lmin)$ over $[x^*,e]$.

  Fix $\varepsilon > 0$ and a strategy $\mu \in \SigmaMin$ (note that
  $\Gamma$ and $\Gamma'$ admit the same sets of strategies).  We take
  $\chi_\varepsilon \in \SigmaMax$ that is $\varepsilon$-optimal for
  $\mu$ in $\Gamma'$. For every $s \in \eset{\lmin}\times [x^*,e]$ let
  $\omega$ denote the unique run $\Run(s,\mu,\chi_\varepsilon)$, and
  lets assume it visits $\lmin$ exactly $m$ times, after transitions
  $i_1,\ldots,i_m$. We assume, without loss of generality, that after
  every such transition, a continuous one is taken. We then have:
  \begin{eqnarray*}
    \Cost_{\Gamma}(\omega) &=& \Price(\omega_{i_m}) + \lambda_{i_m +
      1}\cdot\pi(\lmin) + \Cost_{\Gamma}(\Run(\omega_{i_m
      +1},\mu,\chi_\varepsilon))\\ &\geqslant& \Price(\omega_{i_m}) +
    \lambda_{i_m + 1}\cdot\pi(\lmin) + \OptCost_{\Gamma'}(s_{i_m + 1}) -
    \varepsilon\\ &\geqslant& \Price(\omega_{i_1}) + (x_{i_m} +
    \lambda_{i_m + 1}-x_{i_1})\cdot\pi(\lmin) + \OptCost_\Gamma(s_{i_m
      + 1}) - \varepsilon\\ &\geqslant& \Price(\omega_{i_1}) +
    (e-x_{i_1})\cdot\pi(\lmin) + \OptCost_{\Gamma}((\lmin,e)) -
    \varepsilon\\
    &=&
    \OptCost_{\Gamma'}(s) - \varepsilon,
  \end{eqnarray*}
  where $s_i = (l_i,x_i)$. The first inequality holds because
  $\chi_\varepsilon$ is $\varepsilon$-optimal for $\mu$ in~$\Gamma'$,
  and because the suffix of $\omega$ starting in $s_{i_m}$ does not
  visit $\lmin$ but for the first transition, which comes at a price
  zero. The second inequality holds because $\lmin$ minimizes
  $\pi(\lmin)$. The third inequality follows from the definition of
  $\OptCost_{\Gamma'}(\lmin)$. Finally, $\Length(\omega_{i_1}) = 0$,
  hence $\Cost(\omega_{i_1}) =0$.

  We now proceed to
  the second case, i.e., that
  $\OptCost_{\Gamma''}(\lmin) \leqslant \OptCost_{\Gamma}(\lmin)$ over
  $[b,x^*]$.  The fact that $\OptCost_{\Gamma''}(\lmin) \leqslant
  \OptCost_{\Gamma'}(\lmin)$ over $[b,x^*]$ implies that the slope of $f$
  is shallower than $-\pi(\lmin)$.

  Fix $\varepsilon > 0$ and $\mu \in \SigmaMin$ in $\Gamma$.  Let
  $\chi_\varepsilon \in \SigmaMax$ be a strategy that is
  $\varepsilon$-optimal for $\mu$ in $\Gamma''$ (the sets of
  strategies in $\Gamma$ and $\Gamma''$ are equal). For every $s \in
  \eset{l}\times [b,x^*]$, let $\omega$ denote the unique run
  $\Run(s,\mu,\chi_\varepsilon)$, which pays $m$ visits to $\lmin$
  (the notation and assumptions are the same as in the first case). We
  then have:
  \begin{eqnarray*}
    \Cost_\Gamma(\omega) & = &\Price(\omega_{i_m}) + \lambda_{i_m + 1}
    \cdot\pi(\lmin) + \Cost_\Gamma(\Run(\omega_{i_m
      +1},\mu,\chi_{\varepsilon'}))\\ & \geqslant
    &\Price(\omega_{i_m}) + \lambda_{i_m +1} \cdot\pi(\lmin) +
    \OptCost_{\Gamma''}(s_{i_m +1},\mu,\chi_{\varepsilon'})) -
    \varepsilon\\ & \geqslant & \Price(\omega_{i_1}) + (x_{i_m}+
    \lambda_{i_m +1}-x_{i_1}) \cdot\pi(\lmin) +
    \OptCost_{\Gamma''}(s_{i_m +1}) - \varepsilon\\ &\geqslant&
    \Price(\omega_{i_1}) + \OptCost_{\Gamma''}(s_{i_1})
    -\varepsilon\\ &=& \OptCost_{\Gamma''}(s_{i_1}) -\varepsilon.
  \end{eqnarray*}
  The first inequality holds because $\omega$ does not feature
  transitions ending in $\lmin$, modulo its prefix $\omega_{i_m +1}$,
  and because $\chi_{\varepsilon}$ is $\varepsilon$-optimal for $\mu$
  in $\Gamma''$. The second inequality holds because $\lmin$ minimizes
  $\pi(\lmin)$. The final inequality holds because the slope of $f$ is
  shallower than $-\pi(\lmin)$ over $[b,x^*]$.  This finishes the
  proof of the theorem.  \qed
\end{proof}

We have proved that the algorithm is partially correct. It remains to
prove its total correctness, i.e., that it terminates. 

\begin{theorem}
  \label{lem:iteration_termination}
  The algorithm $\solverp$ terminates.
\end{theorem}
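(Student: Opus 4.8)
The plan is to separate termination into two nested claims: that the outer recursion bottoms out, and that each inner iterative loop (Cases~2 and~3) performs only finitely many passes. For the outer recursion I would induct on $|\nonurg(\Gamma_I)|$. The base case is the First case, where $\nonurg(\Gamma_I)=\emptyset$ and the game is solved directly on a finite graph. For the step, I note that every recursive call spawned in Cases~2 and~3 is a call $\solverp(\Gamma')$ or $\solverp(\Gamma'')$ on a game obtained from $\Gamma$ either by making $\lmin$ urgent or by turning $\lmin$ into a goal location; either way the callee has strictly fewer non-urgent locations, since the $\tcompose$ and $[\LGoal\cup l,h]$ operations only introduce goal locations and edges, never non-urgent locations. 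Thus, conditional on each inner loop issuing finitely many calls, the recursion is well-founded.

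It remains to bound the number of iterations of a single inner loop. I would first observe that the loop makes strict leftward progress: whenever the index $i$ chosen in Step~2 is positive, the updated right endpoint $r=b_i^f$ is the left end of a genuine affine piece whose right end does not exceed the current $r$, so $r$ strictly decreases; and the loop halts as soon as $i=0$. Consequently the intervals resolved across iterations, $[b_i^f,e_i^f]$ in Case~2 and $[b_i^f,x^*]$ in Case~3, are pairwise disjoint and accumulate toward $b$.

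The key to converting this strict progress into a finite bound is the structural fact, flagged in the description of Step~2 of each case, that the resolved piece $f_i$ has slope strictly shallower than $-\pi(\lmin)$ and therefore must coincide with a piece of one of the original goal cost functions of $\Gamma$. I would establish this by induction over the computation using Proposition~\ref{prop:minC_segment_preservation} (and its $\maxC$ analogue): every piece manufactured by a $\minC(\cdot,\pi(l))$ or $\maxC(\cdot,\pi(l))$ operation has slope $-\pi(l)\leqslant-\pi(\lmin)$, while every genuinely shallower piece is merely copied from the operand; moreover each goal cost function created by $\tcompose$ or $[\LGoal\cup\lmin,h]$ has slope $-\pi(l)\leqslant-\pi(\lmin)$ for some non-urgent $l$. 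Hence a piece of slope strictly greater than $-\pi(\lmin)$ is never produced and must trace back to an original goal cost function, of which there are only finitely many pieces. In Case~3 this immediately closes the argument: by Step~3 the resolved piece $f_i$ becomes the value of $\OptCost_\Gamma(\lmin)$ over $[b_i^f,x^*]$, so successive iterations deposit distinct shallow pieces into the single, finitely-pieced cost function $\OptCost_\Gamma(\lmin)$, capping their number.

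The step I expect to be the main obstacle is Case~2. There the shallow piece $f_i$ is not kept but overwritten by the steep cost function $h$ of slope $-\pi(\lmin)$, so it does not survive into $\OptCost_\Gamma(\lmin)$, and, because the function $f$ is recomputed from scratch over the shrunken interval with a fresh right boundary supplied by $\tcompose$, I must rule out that an original goal-function piece already consumed on the right is reintroduced as a shallow piece further left, which would defeat a naive count by pieces. Here I would not try to bound iterations by breakpoints of the final function (over each resolved interval that function is entirely steep, so it may contribute none), but instead transport the termination argument from the construction of Bouyer et al., whose Case~2 is identical and already invoked in Lemma~\ref{lem:case_2_partially_correct}; concretely, I would argue that the boundary values $b_i^f$ are drawn from the fixed finite set of abscissae determined by the original goal cost functions and the price rates, and so cannot accumulate strictly above $b$.
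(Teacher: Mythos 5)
Your outer induction on the number of non-urgent locations and your structural fact are both sound, and they coincide with the paper's starting point: by Lemmas~\ref{lemma:minC_piecewise_affine} and~\ref{lem:OptCost_opt_eq}, via Proposition~\ref{prop:minC_segment_preservation}, every affine piece of slope strictly shallower than $-\pi(\lmin)$ traces back to a piece of a goal cost function of $\Gamma$. The gap is in how you turn this into a finite bound on the number of iterations. Your Case~3 shortcut is circular: you cap the iterations by the number of pieces of ``the finitely-pieced cost function $\OptCost_\Gamma(\lmin)$,'' but no a priori finiteness of that representation is available --- the paper treats ``$\OptCost$ is locationwise a cost function'' as a \emph{working assumption}, validated only by termination plus partial correctness, i.e.\ by the very theorem you are proving. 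If instead you count by original goal pieces (which is legitimate), Case~3 suffers from exactly the recurrence problem you flag for Case~2: nothing said so far prevents the same goal piece $g$ from being deposited over many pairwise disjoint intervals, separated by steep pieces of slope $-\pi(\lmin)$. Excluding such recurrence is the actual content of the paper's proof, and it is needed in \emph{both} cases: the paper shows that if the iteration meets the same piece $g$ again further to the left without having passed any intersection point, then the part of the solution computed in between lies (weakly) above the line of $g$ and touches it, so it must contain a piece strictly shallower than $g$; that piece itself coincides with a goal cost function, and its contact with $g$ is one of the finitely many pairwise intersection points of goal cost functions --- a contradiction. Hence each iteration either consumes a fresh goal piece or passes a fresh intersection point, which bounds the number of iterations by a quantity determined by the input alone.

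Your Case~2 fallbacks do not repair this. Lemma~\ref{lem:case_2_partially_correct} defers to Bouyer et al.\ only for \emph{partial correctness}; the paper does not outsource termination but proves it with the intersection-point argument above, so ``transporting'' their termination proof is a citation where an argument is required. Your alternative claim --- that the breakpoints $b_i^f$ are drawn from a fixed finite set of abscissae determined by the goal cost functions and the price rates --- is unsubstantiated, and it is not what the paper establishes: in the problematic subcase (when $f_{i-1}$ has slope $-\pi(l)$ for some non-goal location $l$), the point $b_i^f$ is where such a slope-$(-\pi(l))$ piece meets $g$, and these pieces can emanate from breakpoints manufactured arbitrarily deep in the recursion, so no fixed finite grid of abscissae is available in advance; it is precisely this subcase that the paper's recurrence argument is designed to handle.
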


\begin{proof}
  To prove termination of the algorithm, we need to prove the
  termination of the iterative procedures from cases 2 and 3 of the
  algorithm.

  Each of the two cases is different, however, they have one thing in
  common. In each iteration, in both cases, an interval with slope
  shallower than $-\pi(\lmin)$ is processed. Since $\lmin$ minimizes
  $\pi(\lmin)$, and by Lemmas \ref{lemma:minC_piecewise_affine} and
  \ref{lem:OptCost_opt_eq}, this interval corresponds to a segment of
  a goal cost function. We argue that the number of iterations in each
  case is bounded by the number of all the possible intersections of
  the cost functions assigned to goal locations, which is finite.

  More precisely, let $\Gamma'$, $I$, $f$ and $i$ be defined as in Case
  2 of the algorithm. The slope of $f_i$ over $I_i$ is shallower than
  $-\pi(\lmin)$, so by Lemmas \ref{lemma:minC_piecewise_affine} and
  \ref{lem:OptCost_opt_eq}, $f_i$ coincides with some cost function
  over $I_i$ --- denoted by $g$. If $i > 1$, then there are two
  possibilities, either $b_i$ coincides with an intersection of cost
  functions from two different goal locations, or otherwise $f_{i-1}$
  has the slope equal to $-\pi(l)$ for some non-goal location~$l$.
  
  In the first case, the iteration must have moved past one of the finitely
  many intersection points. In the second case, we need to argue that
  if the procedure once again encounters the same affine piece of
  $g$ (but over a different interval), then it must have also passed
  at least one of the finitely many intersection points. Let $I' =
  [b',e'] \subseteq [b,b_i)$ denote the interval over which the
    procedure encounters $g$ again. Assume that the procedure did not
    pass any intersection points before $I'$. This implies that
    $\OptCost_{\Gamma'}(\lmin) \geqslant g$ over $[e',b_i]$, and hence
    $\OptCost_{\Gamma'}(\lmin)$ must contain an affine piece that has a
    slope shallower than $g$ over $[e',z]$, for some $z \in (e',b_i]$.
  However, such a piece coincides with a goal cost function, so an
  intersection point must have been passed.

  So far we have shown termination of the iterative procedure in Case
  2. It remains to show the same for Case 3. Let $\Gamma'$, $f$, $i$,
  and $x^*$ be defined as in Case~3 of the algorithm. We argue that
  each affine segment of a goal cost function is processed only once.
  If $i > 1$, then $f_{i-1}$ either coincides with a different piece
  of a goal cost function, which means that we have passed one of the
  finitely many intersection points and the $f_i$ segment has been
  processed, or its slope is equal to $-\pi(l)$, for some non-goal
  location $l$. In the latter case, we have that
  $\OptCost_{\Gamma'}(\lmin)$ has a slope steeper than $f_i$, and
  hence, it is strictly greater than $f_i$ over $[b,b_i)$. This means
    that in the subsequent iterations, if $f_i$ is to be encountered,
    a piece with a slope smaller than that of $f_i$ needs to occur,
    but this means that an intersection point has been processed.

    We have shown that in each step of the algorithm the iterative
    procedure of Cases 2 and 3 terminates, and hence, the algorithm
    terminates.
    \qed
\end{proof}

We have proved that the procedure $\solverp$ is correct. The question
that remains, is its complexity. We have the following result.

\begin{theorem}
  The algorithm $\solverp$ is in EXPTIME.
\end{theorem}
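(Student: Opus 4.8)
The plan is to bound the running time of $\solverp$ by establishing two quantities: the depth of recursion, and the amount of work performed within each recursive call, with particular attention to the total number of cost-function pieces that the algorithm manipulates. The depth of recursion is straightforward: each recursive call (in Cases 2 and 3) is made on a game with exactly one fewer non-urgent location than $\Gamma$, so the recursion depth is bounded by $|\nonurg(\Gamma)| \leqslant |\Loc|$, which is linear in $|\Gamma|$. The crux is therefore to control the branching, i.e., how many recursive subcalls each call spawns, together with the size (number of affine pieces) of the cost functions passed between calls.

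First I would measure the complexity of a single invocation of the iterative procedure. By the termination argument in Theorem~\ref{lem:iteration_termination}, the number of iterations in each of Cases 2 and 3 is bounded by the number of intersection points among the cost functions assigned to goal locations; since each goal cost function has boundedly many pieces and intersections of affine pieces are determined by their (integer-encoded) slopes and offsets, this count is polynomial in the total number of goal-function pieces. Each iteration issues two recursive calls (one for $\Gamma'$ in Step~1 and one for $\Gamma''$ in Step~2), each on a game with one fewer non-urgent location. I would then track how the number of pieces grows: by Lemma~\ref{lemma:minC_piecewise_affine} and Proposition~\ref{prop:minC_segment_preservation}, applying $\minC$ (or $\maxC$) produces at most as many pieces as the input, and by the First Case bound, if $\afgoal$ has $p$ pieces in total then $\OptCost$ has at most $2p$ pieces. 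The override operations in Step~3 stitch together boundedly many pieces per iteration, and the $\tcompose$ operation introduces one fresh goal location per non-urgent location, contributing a controlled number of new cost-function pieces.

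The recurrence I would set up expresses the total number of pieces (equivalently, the representation size) after processing a game with $n$ non-urgent locations. Because each level of recursion may multiply the piece count by a factor polynomial in the current size (owing to the iteration count times the per-iteration branching), and because there are at most $|\Loc|$ levels, the piece count can grow like a tower of polynomial factors iterated $|\Loc|$ times. A bound of the form $T(n) \leqslant \mathrm{poly}(|\Gamma|) \cdot T(n-1)$, or more precisely with the size entering the polynomial, unrolls to a singly exponential bound $2^{\mathrm{poly}(|\Gamma|)}$, which is exactly the EXPTIME target. The key point to verify carefully is that the piece count at each level is only polynomially larger than at the previous level, so that after linearly many levels the product is $2^{\mathrm{poly}(|\Gamma|)}$ rather than an iterated exponential.

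The main obstacle, and the place where the improvement over the $3$EXPTIME bound of Bouyer et al.~\cite{BLMR06} is earned, is precisely the handling of minimizer locations in Case~3. In the original algorithm a minimizer location triggered the recursive solution of two genuinely independent subproblems, and the resulting branching factor compounded multiplicatively across recursion levels to yield an additional exponential blowup. Here I must argue that the iterative procedure of Case~3 behaves analogously to Case~2: although each iteration spawns two subcalls, the number of iterations is bounded (by the intersection-counting argument of Theorem~\ref{lem:iteration_termination}) by a quantity polynomial in the number of goal-function pieces, and crucially the intervals processed in distinct iterations are disjoint, so the subproblems partition rather than duplicate the work. Establishing that this disjointness keeps the aggregate piece count polynomially controlled at each level — so that the recursion contributes only a single exponential overall — is the heart of the complexity analysis and the step I expect to require the most care.
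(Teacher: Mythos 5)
Your overall plan coincides with the paper's proof: bound the recursion depth by the number of non-urgent locations $n$, bound the number of iterations per call via the intersection-counting argument of the termination theorem, note that each iteration spawns two subcalls on games with one fewer non-urgent location, and unroll a recurrence on the number of affine pieces. The paper makes this concrete as $N(0,p)=2p$ and $N(n,p)\leqslant 2p\,N(n-1,2p)$ (with $p$ the number of goal-function pieces), which solves to $N(n,p)\leqslant 2^{(n+1)(n+2)/2}p^{n+1}$. However, the condition you yourself flag as ``the key point to verify'' --- that the piece count at each level is only \emph{polynomially} larger than at the previous level --- is too weak to yield EXPTIME. If the number of goal pieces handed to a subcall were allowed to be a degree-two (or higher) polynomial in the parent's count, then iterating over $n$ levels gives $p^{c^n}$, i.e., a doubly exponential bound; this is essentially the blowup in Bouyer et al.'s algorithm, where the entire $\OptCost$ function of one subgame (with $N(n-1,p)$ pieces) is installed as a goal cost function of another subgame. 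So the criterion you propose to check would not separate the new algorithm from the old one.

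The fact that actually carries the proof, and which your proposal never isolates, is that every goal cost function added when constructing $\Gamma'$ and $\Gamma''$ is a \emph{single affine piece}: the functions introduced by $\tcompose$ have the form $x\mapsto \OptCost_{\Gamma_{I_2}}(l,r)+(r-x)\pi(l)$, and the functions $h$ and $f_i$ added for $\lmin$ are likewise affine. They depend only on point values (or single segments) of the previously computed solution, never on the whole function. Consequently the goal-piece count passed down grows only additively, $p\mapsto p+n+O(1)\leqslant 2p$, and the per-level multiplicative factor on the output (iterations times branching, roughly $2p$) stays bounded by a quantity that compounds to $2^{O(n^2)}p^{n+1}$ over all levels. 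A secondary inaccuracy: the Step-1 subcalls within one invocation are made on the games $\Gamma'_{[b,r]}$ with $r$ decreasing, so their intervals are nested, not disjoint; the subproblems do not ``partition the work.'' What is disjoint is the family of intervals on which the solution is newly established in Step 3, and that disjointness is what bounds the number of iterations via intersection points --- the work bound itself is simply the product of the iteration count and the subcall cost, exactly as in the paper's recurrence.
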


\begin{proof}
  Given an automaton $\Aa$, let $n$ denote the number of non-urgent
  locations and $p$ the total number of pieces in $\afgoal$.  The
  complexity of computing the solution, using $\solverp$, depends on
  the number of pieces that constitute $\OptCost_\Gamma$. Let $N(n,p)$
  denote the upper bound on the number of pieces in $\OptCost_\Gamma$.
  We now construct a recursion to characterise $N(n,p)$.

  If $n = 0$, then $N(n,p) = 2p$ \cite{BLMR06}. If $n > 0$, both cases
  take $p$ (as argued in the proof of Theorem
  \ref{lem:iteration_termination}) iterations, and each requires
  solving two games with the solution complexity equal to $N(n-1,p+n)$
  and $N(n-2,p+n-1)$. We can assume that $p > n$ is the case of real
  interest, so we have $N(n,p) \leqslant 2p N(n-1, 2p)$. It can be
  easily verified that:
    $N(n,p) \leqslant 2^{\frac{(n+1)(n+2)}{2}}p^{n+1}$.
  This establishes that $\solverp$ is indeed in EXPTIME.\qed
\end{proof}

\paragraph{Discussion.}
We now briefly compare our algorithm with that of Bouyer et al. The
3EXPTIME algorithm, introduced by Bouyer et al.~\cite{BLMR06}, differs
from the one presented in this paper in the way the minimizer
locations are handled. As was explained above, the algorithm presented
in this paper uses an iterative procedure, similar in spirit to that
used for handling maximizer locations. The 3EXPTIME algorithm, on the
other hand, exploited the following observation: if the location, $l$,
which minimizes the weight function, is visited several times, then
the minimizer would not be worse off, if, upon the first visit, she
had waited the whole time that passes between the first and last visit
--- this is valid because all other locations have a higher value of
the weight function.  This intuition is formally captured by the
algorithm in the following way. Two copies of the original automaton
are created, with the only difference that in both of them $l$ becomes
a goal location, and hence both automata have one less non-urgent
location --- this duplication introduces an exponential blowup in
complexity. The first automaton captures the behavior before $l$ is
entered for the first time, whereas the second copy captures the
behaviour afterwords. In the second copy, $l$ is transformed into a
goal location with a cost function equivalent to positive infinity ---
this captures the intuition, that it is sufficient to visit $l$ only
once. The algorithm first computes $\OptCost$ for the second copy,
then, using that result, computes $\OptCost$, for all states having
$l$ as the location (this is in fact a game with a single non-urgent
location), and finally computes $\OptCost$ for the first copy, with
$\OptCost$, computed in the previous step, being assigned as a cost
function to $l$. $\OptCost$ computed for the first copy is the sought
solution. The second exponential blowup originated from the
construction that allowed to assume that the clock value is bound by
$1$. The construction used by Bouyer et al.~\cite{BLMR06} used
locations to encode the integer part of the clock value, and the clock
itself captured only the fractional part of the clock value --- this
yielded an exponential blowup, as there had to be a copy of the
original location for every integer value between $0$ and the value of
the largest constant provided in the definition of the automaton
(recall, that constants are encoded in binary). Our construction,
presented in Sec.~\ref{sec:preliminaries}, avoids this blowup.

%
%
\bibliographystyle{eptcs}

\bibliography{article_bibliography}

\begin{thebibliography}{10}
\providecommand{\bibitemdeclare}[2]{}
\providecommand{\urlprefix}{Available at }
\providecommand{\url}[1]{\texttt{#1}}
\providecommand{\href}[2]{\texttt{#2}}
\providecommand{\urlalt}[2]{\href{#1}{#2}}
\providecommand{\doi}[1]{doi:\urlalt{http://dx.doi.org/#1}{#1}}
\providecommand{\bibinfo}[2]{#2}

\bibitemdeclare{article}{AD94}
\bibitem{AD94}
\bibinfo{author}{R.~Alur} \& \bibinfo{author}{D.~Dill} (\bibinfo{year}{1994}):
  \emph{\bibinfo{title}{A Theory of Timed Automata}}.
\newblock {\sl \bibinfo{journal}{Theoretical Computer Science}}
  \bibinfo{volume}{126}(\bibinfo{number}{2}), pp. \bibinfo{pages}{183--235}.
\newblock \urlprefix\url{http://portal.acm.org/citation.cfm?id=180519}.

\bibitemdeclare{inproceedings}{BDL04}
\bibitem{BDL04}
\bibinfo{author}{G.~Behrmann}, \bibinfo{author}{A.~David} \&
  \bibinfo{author}{K.~G. Larsen} (\bibinfo{year}{2004}):
  \emph{\bibinfo{title}{A Tutorial on UPPAAL}}.
\newblock In: {\sl \bibinfo{booktitle}{Formal Methods for the Design of
  Real-Time Systems}}. {\sl \bibinfo{series}{LNCS}} \bibinfo{volume}{3185},
  \bibinfo{publisher}{Springer}, pp. \bibinfo{pages}{200--236}.
\newblock
  \urlprefix\url{http://www.springerlink.com/content/03wam998tjn5mgma/}.

\bibitemdeclare{inproceedings}{BFH+01}
\bibitem{BFH+01}
\bibinfo{author}{G.~Behrmann}, \bibinfo{author}{A.~Fehnker},
  \bibinfo{author}{T.~Hune}, \bibinfo{author}{K.~G. Larsen},
  \bibinfo{author}{P.~Pettersson}, \bibinfo{author}{J.~Romijn} \&
  \bibinfo{author}{F.~Vaandrager} (\bibinfo{year}{2001}):
  \emph{\bibinfo{title}{Minimum-Cost Reachability for Priced Timed Automata}}.
\newblock In: {\sl \bibinfo{booktitle}{HSCC'01}}. {\sl \bibinfo{series}{Lecture
  Notes in Computer Science}} \bibinfo{volume}{2034},
  \bibinfo{publisher}{Springer}, pp. \bibinfo{pages}{147--161},
  \doi{10.1007/3-540-45351-2}.
\newblock \urlprefix\url{http://portal.acm.org/citation.cfm?id=710599}.

\bibitemdeclare{article}{BJSV10}
\bibitem{BJSV10}
\bibinfo{author}{J.~Berendsen}, \bibinfo{author}{D.~J. Jansen},
  \bibinfo{author}{J.~Schmaltz} \& \bibinfo{author}{F.~W. Vaandrager}
  (\bibinfo{year}{2010}): \emph{\bibinfo{title}{The axiomatization of override
  and update}}.
\newblock {\sl \bibinfo{journal}{Applied Logic}}
  \bibinfo{volume}{8}(\bibinfo{number}{8}), pp. \bibinfo{pages}{141--150},
  \doi{10.1016/j.jal.2009.11.001}.
\newblock
  \urlprefix\url{http://www.sciencedirect.com/science/article/pii/S1570868309000676}.

\bibitemdeclare{inproceedings}{BBJLR08}
\bibitem{BBJLR08}
\bibinfo{author}{P.~Bouyer}, \bibinfo{author}{T.~Brihaye},
  \bibinfo{author}{M.~Jurdzi{\'n}ski}, \bibinfo{author}{R.~Lazi{\'c}} \&
  \bibinfo{author}{M.~Rutkowski} (\bibinfo{year}{2008}):
  \emph{\bibinfo{title}{Average-Price and Reachability-Price Games on Hybrid
  Automata with Strong Resets}}.
\newblock In: {\sl \bibinfo{booktitle}{FORMATS'08}}. {\sl
  \bibinfo{series}{LNCS}} \bibinfo{volume}{5215},
  \bibinfo{publisher}{Springer}, pp. \bibinfo{pages}{63--77},
  \doi{10.1007/978-3-540-85778-5}.
\newblock
  \urlprefix\url{http://www.springerlink.com/content/a870592711j40661/}.

\bibitemdeclare{article}{BBM06}
\bibitem{BBM06}
\bibinfo{author}{P.~Bouyer}, \bibinfo{author}{T.~Brihaye} \&
  \bibinfo{author}{N.~Markey} (\bibinfo{year}{2006}):
  \emph{\bibinfo{title}{Improved Undecidability Results on Weighted Timed
  Automata}}.
\newblock {\sl \bibinfo{journal}{Information Processing Letters}}
  \bibinfo{volume}{98}(\bibinfo{number}{5}), pp. \bibinfo{pages}{188--194},
  \doi{10.1016/j.ipl.2006.01.012}.
\newblock \urlprefix\url{http://portal.acm.org/citation.cfm?id=1711154}.

\bibitemdeclare{inproceedings}{BBL04}
\bibitem{BBL04}
\bibinfo{author}{P.~Bouyer}, \bibinfo{author}{E.~Brinksma} \&
  \bibinfo{author}{K.~G. Larsen} (\bibinfo{year}{2004}):
  \emph{\bibinfo{title}{Staying Alive as Cheaply as Possible}}.
\newblock In: {\sl \bibinfo{booktitle}{HSCC 2004}}. {\sl
  \bibinfo{series}{LNCS}} \bibinfo{volume}{2993},
  \bibinfo{publisher}{Springer}, pp. \bibinfo{pages}{203--218}.
\newblock
  \urlprefix\url{http://www.lsv.ens-cachan.fr/~bouyer/mes-publis.php?onlykey=BBL-hscc2004}.

\bibitemdeclare{article}{BBL08}
\bibitem{BBL08}
\bibinfo{author}{P.~Bouyer}, \bibinfo{author}{E.~Brinksma} \&
  \bibinfo{author}{K.~G. Larsen} (\bibinfo{year}{2008}):
  \emph{\bibinfo{title}{Optimal Infinite Scheduling for Multi-Priced Timed
  Automata}}.
\newblock {\sl \bibinfo{journal}{Formal Methods in System Design}}
  \bibinfo{volume}{32}(\bibinfo{number}{1}), pp. \bibinfo{pages}{2--23},
  \doi{10.1007/s10703-007-0043-4}.
\newblock
  \urlprefix\url{http://www.lsv.ens-cachan.fr/~bouyer/mes-publis.php?onlykey=BBL-fmsd06}.

\bibitemdeclare{inproceedings}{BLMR06}
\bibitem{BLMR06}
\bibinfo{author}{P.~Bouyer}, \bibinfo{author}{K.~G. Larsen},
  \bibinfo{author}{N.~Markey} \& \bibinfo{author}{J.~I. Rasmussen}
  (\bibinfo{year}{2006}): \emph{\bibinfo{title}{Almost Optimal Strategies in
  One-Clock Priced Timed Automata}}.
\newblock In: {\sl \bibinfo{booktitle}{FSTTCS'06}}. {\sl
  \bibinfo{series}{LNCS}} \bibinfo{volume}{4337},
  \bibinfo{publisher}{Springer}, pp. \bibinfo{pages}{345--356},
  \doi{10.1007/11944836\_32}.
\newblock
  \urlprefix\url{http://www.lsv.ens-cachan.fr/~markey/publis.php?onlykey=BLMR-fsttcs2006}.

\bibitemdeclare{inproceedings}{JLR09}
\bibitem{JLR09}
\bibinfo{author}{M.~Jurdzi{\'n}ski}, \bibinfo{author}{R.~Lazi{\'c}} \&
  \bibinfo{author}{M.~Rutkowski} (\bibinfo{year}{2009}):
  \emph{\bibinfo{title}{Average-Price-per-Reward Games on Hybrid Automata with
  Strong Resets}}.
\newblock In: {\sl \bibinfo{booktitle}{VMCAI'09}}. \bibinfo{volume}{5403},
  \bibinfo{publisher}{Springer}, pp. \bibinfo{pages}{167 -- 181},
  \doi{10.1007/978-3-540-93900-9}.
\newblock
  \urlprefix\url{http://www.springerlink.com/content/57635657g2j60414/}.

\bibitemdeclare{inproceedings}{JT07}
\bibitem{JT07}
\bibinfo{author}{M.~Jurdzi{\'n}ski} \& \bibinfo{author}{A.~Trivedi}
  (\bibinfo{year}{2007}): \emph{\bibinfo{title}{Reachability-Time Games on
  Timed Automata}}.
\newblock In: {\sl \bibinfo{booktitle}{ICALP'07}}. {\sl \bibinfo{series}{LNCS}}
  \bibinfo{volume}{4596}, \bibinfo{publisher}{Springer}, pp.
  \bibinfo{pages}{838--849}, \doi{10.1007/978-3-540-73420-8}.
\newblock
  \urlprefix\url{http://www.springerlink.com/content/b17738g23844w336/}.

\bibitemdeclare{inproceedings}{JT08}
\bibitem{JT08}
\bibinfo{author}{M.~Jurdzi{\'n}ski} \& \bibinfo{author}{A.~Trivedi}
  (\bibinfo{year}{2008}): \emph{\bibinfo{title}{Average-Time Games}}.
\newblock In: {\sl \bibinfo{booktitle}{FSTTCS'08}}. {\sl
  \bibinfo{series}{LIPI}}~\bibinfo{volume}{2}, pp. \bibinfo{pages}{340--351},
  \doi{10.4230/LIPIcs.FSTTCS.2008.1765}.
\newblock \urlprefix\url{http://drops.dagstuhl.de/opus/volltexte/2008/1765/}.

\end{thebibliography}

\end{document}